\documentclass[12pt]{article}
\usepackage[margin= 2 cm, bottom=20mm,footskip=10mm,top=20mm]{geometry}
\usepackage{hyperref}
\usepackage{amsthm}
\usepackage{amsmath}
\usepackage{amssymb}
\usepackage{amsfonts}
\usepackage{cite}
\usepackage{epsfig}
\usepackage{mathtools}
\usepackage{enumitem}
\usepackage{stackengine,graphicx}

\newtheorem{theorem}{Theorem}
\newtheorem{lemma}[theorem]{Lemma}

\newtheorem{corollary}[theorem]{Corollary}
\newtheorem{problem}{Problem}

\newcommand{\RR}{\mathbb{R}}
\newcommand{\FF}{\mathbb{F}}
\newcommand{\II}{\mathcal{I}}
\newcommand{\NN}{\mathbb{N}}

\newcommand{\calO}{\mathcal{O}}
\newcommand{\ZZ}{\mathbb{Z}}
\newcommand{\QQ}{\mathbb{Q}}

\DeclareMathOperator{\rank}{rank}
\DeclareMathOperator{\cl}{cl}
\DeclareMathOperator{\dd}{dd}

\DeclareMathOperator{\td}{td}

\DeclareMathOperator{\ec}{ec}
\DeclareMathOperator{\cd}{cd}
\DeclareMathOperator{\cdd}{cdd}
\DeclareMathOperator{\csd}{c^{*}\hspace{-3pt}d}
\DeclareMathOperator{\csdd}{c^{*}\hspace{-3pt}dd}

\def\ve#1{\mathchoice{\mbox{\boldmath$\displaystyle\bf#1$}}
{\mbox{\boldmath$\textstyle\bf#1$}}
{\mbox{\boldmath$\scriptstyle\bf#1$}}
{\mbox{\boldmath$\scriptscriptstyle\bf#1$}}}

\newcommand\vex{{\ve x}}
\newcommand\vey{{\ve y}}

\def\G{\mathcal{G}}
\def\C{\mathcal{C}}

\def\Ker{\textrm{Ker}}

\begin{document}
\title{Characterization of matrices with bounded Graver bases and depth parameters and
       applications to integer programming
       \thanks{
       The first author was partially supported by the Polish National Science Center grant (BEETHOVEN; UMO-2018/31/G/ST1/03718).
       The third author was supported by the European Research Council (ERC) under the European Union's Horizon 2020 research and innovation programme (grant agreement No 648509).
       This publication reflects only its authors' view; the ERC Executive Agency is not responsible for any use that may be made of the information it contains.
       The third and fourth authors were supported by the MUNI Award in Science and Humanities (MUNI/I/1677/2018) of the Grant Agency of Masaryk University.}
       \thanks{
       The extended abstract containing the results presented in this paper has appeared in the proceedings of the 49th International Colloquium Automata, Languages and Programming (ICALP'22).
       }
       }
\author{Marcin Bria\'nski\thanks{Theoretical Computer Science Department, Faculty of Mathematics and Computer Science, Jagiellonian University, Krak\'ow, Poland, E-mail: \texttt{marcin.brianski@doctoral.uj.edu.pl}}\and
        Martin Kouteck\'y\thanks{Computer Science Institute, Charles University, Prague, Czech Republic. E-mail: {\tt koutecky@iuuk.mff.cuni.cz}.}\and
        Daniel Kr\'al'\thanks{Faculty of Informatics, Masaryk University, Brno, Czech Republic. E-mails: {\tt dkral@fi.muni.cz} and {\tt kristyna.pekarkova@mail.muni.cz}.}\and
\newcounter{kth}
\setcounter{kth}{5}
	Krist\'yna Pek\'arkov\'a$^\fnsymbol{kth}$\and
	Felix Schr\"oder\thanks{Institute of Mathematics, Technical University Berlin, Berlin, Germany. E-Mail: {\tt fschroed@math.tu-berlin.de}}}
        
\date{}

\maketitle

\begin{abstract}
An intensive line of research on fixed parameter tractability of integer programming
is focused on exploiting the relation between the sparsity of a constraint matrix $A$ and
the norm of the elements of its Graver basis.
In particular, integer programming is fixed parameter tractable
when parameterized by the primal tree-depth and the entry complexity of $A$, and
when parameterized by the dual tree-depth and the entry complexity of $A$;
both these parameterization imply that $A$ is sparse, in particular,
the number of its non-zero entries is linear in the number of columns or rows, respectively.

We study preconditioners transforming a given matrix to a row-equivalent sparse matrix if it exists and
provide structural results characterizing the existence of a sparse row-equivalent matrix in terms of
the structural properties of the associated column matroid.
In particular, our results imply that
the $\ell_1$-norm of the Graver basis is bounded by a function of the maximum $\ell_1$-norm of a circuit of $A$.
We use our results to design a parameterized algorithm that constructs a matrix row-equivalent to an input matrix $A$ that
has small primal/dual tree-depth and entry complexity if such a row-equivalent matrix exists.

Our results yield parameterized algorithms for integer programming
when parameterized by the $\ell_1$-norm of the Graver basis of the constraint matrix,
when parameterized by the $\ell_1$-norm of the circuits of the constraint matrix,
when parameterized by the smallest primal tree-depth and entry complexity of a matrix row-equivalent to the constraint matrix, and
when parameterized by the smallest dual tree-depth and entry complexity of a matrix row-equivalent to the constraint matrix.
\end{abstract}

\section{Introduction}

Integer programming is a problem of fundamental importance in combinatorial optimization
with many theoretical and practical applications.
From the computational complexity point of view, integer programming is very hard:
it is one of the 21 problems shown to be NP-complete in the original paper on NP-completeness by Karp~\cite{Kar72} and
remains NP-complete even when the entries of the constraint matrix are zero and one only.
On the positive side,
Kannan and Lenstra~\cite{Kan87,Lens83} showed that integer programming is polynomially solvable in fixed dimension,
i.e., with a fixed number of variables.
Another prominent tractable case is when the constraint matrix is totally unimodular,
i.e., all determinants of its submatrices are equal to $0$ or $\pm 1$,
in which case all vertices of the feasible region are integral and
so linear programming algorithms can be applied.

Integer programming (IP) is known to be tractable for instances
where the constraint matrix of an input instance enjoys a certain block structure.
The two most important cases are the cases of $2$-stage IPs due to Hemmecke and Schultz~\cite{HemS03},
further investigated in particular in~\cite{AscH07,Kle21,CslEPVW21,JanKL21,KleR21,KouLO18}, and
$n$-fold IPs introduced by De Loera et al.~\cite{DHOW} and 
further investigated in particular in~\cite{HemOR13,CheM18,EisHK18,JanLR20,CslEHRW21,KouLO18}.
IPs of this kind appear in various contexts, see e.g.~\cite{JanKMR21,KnoK18,KnoK20,SchSV98}.
These (theoretical) tractability results complement well a vast number of empirical results
demonstrating tractability of instances with a block structure,
e.g.~\cite{BorFM98,BerCCFLMT15,KhaEE18,FerH98,AykPC04,WeiK71,WanR13,GamL10,VanW10}.

There tractability results on IPs with sparse constraint matrices
can be unified and generalized using depth and width parameters of graphs derived from constraint matrices.
Ganian and Ordyniak~\cite{GanO18} initiated this line of study by showing that
IPs with bounded primal tree-depth $\td_P(A)$ of a constraint matrix $A$ and
bounded coefficients of the constraint matrix $A$ and the right hand side $b$ can be solved efficiently.
Levin, Onn and the second author~\cite{KouLO18}
widely generalized this result by showing that
IPs with bounded $\|A\|_\infty$ and bounded primal tree-depth $\td_P(A)$ or
dual tree-depth $\td_D(A)$ of the constraint matrix $A$ can be solved efficiently;
such IPs include $2$-stage IPs, $n$-fold IPs, and their generalizations.

Most of the existing algorithms for IPs assume that
the input matrix is already given in its sparse form.
This is a substantial drawback as existing algorithms cannot be applied to instances that
are not sparse but can be transformed to an equivalent sparse instance.
For example, the matrix in the left below, whose dual tree-depth is $5$,
can be transformed by elementary row operations to the matrix with dual tree-depth $2$ given in the right;
a formal definition of tree-depth is given in Subsection~\ref{subsec:graph}, however, 
just the visual appearance of the two matrices indicates which is likely to be more amenable to algorithmic techniques.
\[\left(\begin{matrix}
  2 & 2 & 1 & 2 & 1 & 3 & 1 \\
  2 & 1 & 1 & 1 & 2 & 1 & 1 \\
  2 & 2 & 2 & 2 & 2 & 2 & 1 \\
  2 & 1 & 1 & 2 & 2 & 1 & 1 \\
  2 & 2 & 1 & 2 & 1 & 3 & 2
  \end{matrix}\right)
  \quad
  \rightarrow
  \quad
  \left(\begin{matrix}
  2 & 1 & 0 & 1 & 1 & 2 & 1 \\
  0 & 1 & 1 & 0 & 0 & 1 & 0 \\
  1 & 0 & 0 & 0 & 0 & 0 & 0 \\
  0 & 0 & 0 & 1 & 0 & 0 & 0 \\
  0 & 0 & 0 & 0 & 2 & 0 & 1
  \end{matrix}\right)
  \]
This transformation is an example of a preconditioner that
transforms an instance of integer programming to an equivalent one that
is more amenable to existing methods for solving integer programming and
whose existence we investigate in this paper.

Preconditioning a problem to make it computationally simpler
is a ubiquitous preprocessing step in mathematical programming solvers.
An interesting link between matroid theory and preconditioners to sparsity of matrices
was exhibited by Chan and Cooper together with the second, third and fourth authors~\cite{ChaCKKP19,ChaCKKP20}.
In particular, they proved the following structural characterization of matrices that are \emph{row-equivalent},
i.e., can be transformed by elementary row operations, to a matrix with small dual tree-depth:
a matrix is row-equivalent to one with small dual tree-depth if and only if
the column matroid of the matrix has small contraction$^*$-depth (see Theorem~\ref{thm:orig-eq} below).
In this paper,
we further explore this uncharted territory
by providing a structural characterization of matrices row-equivalent to matrices with small primal tree-depth,
designing efficient algorithms for finding preconditioners with respect to both primal and dual tree-depth, and
relating complexity of circuits and Graver basis of constraint matrices.

\subsection{Our contribution}
\label{subsec:contrib}

We now describe the results presented in this paper in detail.
We opted not to interrupt the presentation of our results with various notions, some of which may be standard for some readers, and
rather collect all definitions in a single section---Section~\ref{sec:prelim}.
We remark that the primal tree-depth of a matrix $A$
is a structural parameter that measures the complexity of interaction between the columns of $A$, and
the dual tree-depth of a matrix $A$ measures the complexity of interaction between the rows of $A$.

\subsubsection{Characterization of depth parameters}

Observe that the column matroid of the matrix is preserved by row operations,
i.e., the column matroid of row-equivalent matrices is the same.
The main structural result of~\cite{ChaCKKP19,ChaCKKP20}
is the following characterization of the existence of a row-equivalent matrix with small dual tree-depth
in terms of the structural parameter of the column matroid~\cite[Theorem 1]{ChaCKKP20}.
We remark that the term branch-depth was used in~\cite{ChaCKKP19,ChaCKKP20} in line with the terminology from~\cite{KarKLM17}
but as there is a competing notion of branch-depth~\cite{DevKO20},
we decided to use a different name for this depth parameter throughout the paper to avoid confusion.

\begin{theorem}
\label{thm:orig-eq}
For every non-zero matrix $A$, 
it holds that
the smallest dual tree-depth of a matrix row-equivalent to $A$ is equal to the contraction$^*$-depth of $M(A)$,
i.e., $\td_D^*(A)=\csd(A)$.
\end{theorem}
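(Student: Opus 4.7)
The plan is to prove both inequalities $\td_D^*(A) \le \csd(M(A))$ and $\td_D^*(A) \ge \csd(M(A))$, using the fact that $M(A)$ is preserved under row operations, so $\csd(M(A))$ is an invariant of the equivalence class, whereas $\td_D^*(A)$ is by definition the minimum of $\td_D$ over this class. Both directions rely on translating between elimination forests of the dual graph of a representative matrix and contraction$^*$-decompositions of the column matroid.

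For the lower bound $\td_D(B) \ge \csd(M(B))$ applied to any $B$ equivalent to $A$, fix a representative $B$ achieving $\td_D(B)=\td_D^*(A)$ and let $F$ be an elimination forest of the dual graph $G_D(B)$ of depth $\td_D(B)$. Using the standard elimination-forest property, the set of rows in which a fixed column $c$ of $B$ is non-zero must lie on one root-to-leaf path of $F$; assign $c$ to the deepest such row $k(c)$. Reading $F$ from leaves upward and, at each node $k$, contracting in $M(B)$ the columns $c$ with $k(c)=k$ produces a contraction$^*$-decomposition of $M(B)=M(A)$; the assignment is well defined because when we process $k$ the columns assigned to it have their support entirely among rows still present. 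The resulting decomposition has depth at most that of $F$, so $\csd(M(A)) \le \td_D(B)$.

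For the upper bound $\td_D^*(A) \le \csd(M(A))$, I would argue by induction on $d:=\csd(M(A))$. At the root level of an optimal contraction$^*$-decomposition, a set $C$ of columns (corresponding to the topmost contractions) is such that every connected component of $M(A)/C$ has contraction$^*$-depth at most $d-1$. The columns of $C$ are independent in $M(A)$, so there exists an equally-sized set of rows $R$ with $A[R,C]$ invertible; apply row operations to bring $A[R,C]$ to the identity and zero out $A[R',C]$ for all other rows $R'$. In the resulting matrix $A'$, each row of $R$ becomes isolated from the others once $R$ is deleted in $G_D(A')$ (columns of $C$ no longer connect anything to $R'$), and the restriction of $A'$ to $R'$ and the non-$C$ columns is a matrix representation of $M(A)/C$. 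By the inductive hypothesis applied component-wise, a further sequence of row operations on $R'$ yields a matrix of dual tree-depth at most $d-1$; composing with the elimination of the rows in $R$ at the top gives an elimination forest of depth at most $d$.

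The main obstacle is reconciling the combinatorics of the elimination forest on one side with the purely matroidal contraction$^*$-decomposition on the other, in particular at the level of the $^*$ modifier: contraction$^*$-depth is the object flexible enough to allow grouping elements into a single level (matching the fact that the top ``apex'' of rows $R$ can be eliminated in any order without affecting the tree-depth), and any routine definition of plain contraction-depth would create an off-by-one mismatch. A related delicate point is that row-reducing so $A[R,C]$ is the identity may alter entries of $A[R,\text{rest}]$; one has to verify that although this introduces edges inside $R$ in $G_D(A')$, these edges are immaterial because $R$ is entirely eliminated before any of $R'$, and crucially no new edge is created between $R$ and a row of $R'$ that was not already present, so the inductive structure on $M(A)/C$ is genuinely captured by the transformed submatrix.
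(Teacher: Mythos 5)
The paper does not actually prove Theorem~\ref{thm:orig-eq}; it imports it from~\cite{ChaCKKP19,ChaCKKP20}, so I am judging your argument on its own merits. Your overall strategy (two inequalities, translating between elimination forests of the dual graph and contraction$^*$-decompositions of $M(A)$) is the right one, and your lower bound rests on the correct key observation: the support of each column is a clique in the dual graph, hence lies on a single root-to-leaf path of the elimination forest, which immediately yields the leaf-assignment $f$ and the rank condition in the tree-based definition of $\csd$. (Your phrasing of this step -- ``reading $F$ from leaves upward and contracting the columns assigned to each node'' -- does not match how a contraction$^*$-decomposition is actually built, since several non-parallel columns can be assigned to one node and contractions proceed one one-dimensional subspace at a time from the root down; but the path property plus the non-recursive definition of $\csd$ closes this direction cleanly.)

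The genuine gap is in the upper bound. You take ``a set $C$ of columns corresponding to the topmost contractions'' and row-reduce $A[R,C]$ to the identity. But the contraction$^*$-depth is defined via contractions by \emph{arbitrary} one-dimensional subspaces of the representation space, not only subspaces spanned by columns of $A$; the paper explicitly notes that the minimum depth of a \emph{principal} contraction$^*$-decomposition tree (one whose labels are elements of $M$) can be strictly larger than $\csd(M)$. As written, your induction only shows that $\td_D^*(A)$ is at most the minimum depth of a principal decomposition, which does not suffice for the claimed equality $\td_D^*(A)=\csd(A)$. The repair is standard but must be said: if the topmost contraction is by $\langle v\rangle$ for some $v$ in the column space, perform row operations (a change of basis of the representation space) sending $v$ to a unit vector $e_1$; then $M(A)/\langle v\rangle$ is represented by the resulting matrix with its first row deleted, and that row becomes the root of the elimination tree. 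Two smaller points you should also nail down: the arithmetic in your induction is inconsistent (you let $|C|\ge 1$ contractions reduce the depth bound only to $d-1$ while still charging height $|C|$ for the rows $R$; either contract one subspace per inductive step or charge $d-|C|$ to the components), and when you apply the hypothesis ``component-wise'' to $M(A)/C$ you need the additional row operations that put the components of the quotient matroid into blocks with disjoint row supports, which is possible because distinct components span subspaces meeting only in the origin.
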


We discover structural characterizations of the existence of a row-equivalent matrix with small primal tree-depth and
the existence of a row-equivalent matrix with small incidence tree-depth.

\begin{theorem}
\label{thm:tdP}
For every matrix $A$,
it holds that
the smallest primal tree-depth of a matrix row-equivalent to $A$ is equal to the deletion-depth of $M(A)$,
i.e., $\td_P^*(A)=\dd(A)$.
\end{theorem}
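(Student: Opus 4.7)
The plan is to establish the two inequalities $\td_P^*(A) \geq \dd(M(A))$ and $\td_P^*(A) \leq \dd(M(A))$ separately, both by induction. The matroid $M(A)$ is invariant under row operations, and the key point is that the recursive operations in the definition of deletion-depth (direct-sum decomposition and element deletion) correspond naturally to operations on the primal graph (connected-component decomposition and vertex deletion).

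For the easy direction $\td_P(A) \geq \dd(M(A))$, which I would prove pointwise for every matrix $A$ and then invoke on an equivalent matrix achieving $\td_P^*(A)$, I would induct on the number of columns. First, if two columns have disjoint row supports, they live in complementary coordinate subspaces, so the primal-graph components $C_1, \ldots, C_k$ induce a direct-sum decomposition $M(A) = \bigoplus_j M(A[\cdot, C_j])$. Hence both $\td_P$ and $\dd$ reduce to the maximum over components, and we may assume the primal graph is connected. Then the root of an optimal tree-depth decomposition is a column $v$ such that deleting $v$ drops the primal tree-depth by one, and induction applied to $M(A) \setminus v$ yields $\dd(M(A)) \leq 1 + (\td_P(A) - 1)$.

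For the harder inequality $\td_P^*(A) \leq \dd(M(A))$, I would induct on the number of elements of $M(A)$ and follow the recursive structure of $\dd$. If $M(A)$ decomposes as $M_1 \oplus M_2$ with partition $V = V_1 \sqcup V_2$, the column spaces of $A[\cdot, V_1]$ and $A[\cdot, V_2]$ intersect trivially, so I can choose an invertible row transformation $P$, built from bases of these two column spaces together with a completion to the ambient space, for which $PA$ is (after reordering rows) block diagonal on $V_1$ and $V_2$. Applying induction to each block yields equivalent matrices with primal tree-depth at most $\dd(M(A))$, and a block-diagonal combination preserves this bound. If instead $M(A)$ is indecomposable and non-empty, I pick an element $e$ realizing $\dd(M(A) \setminus e) = \dd(M(A)) - 1$, invoke induction on $A[\cdot, V \setminus e]$ to obtain an equivalent matrix $A_0$ with $\td_P(A_0) \leq \dd(M(A)) - 1$, and apply the same row operations to the full matrix to obtain $A_1 = (A_0 \mid v)$, where $v$ is the image of column $e$. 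Since the primal graph of $A_1$ with $e$ removed coincides with the primal graph of $A_0$, placing $e$ at the root of the forest witnessing $\td_P(A_0)$ and hanging the rest below gives a tree-depth decomposition of $P(A_1)$ of depth at most $\dd(M(A))$.

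The main obstacle is the decomposition step in the second inequality: showing that a matroid-level direct sum lifts to a genuine block-diagonal form via row operations. This rests on the identification of matroid direct sum with trivial intersection of column spaces and on a careful construction of the basis transformation, both of which are linear-algebraic rather than combinatorial. A secondary point requiring care is the synchronization of the base cases of $\td_P$ and $\dd$, in particular how zero columns (loops in $M(A)$) are accounted for in each definition; this can be absorbed by a small preprocessing step noting that loops and isolated primal-graph vertices are freely deletable with matching cost in both parameters.
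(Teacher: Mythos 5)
Your proof is correct and follows essentially the same route as the paper: both inequalities are established by induction along the recursive definition of deletion-depth, with the matroid direct sum corresponding to trivially intersecting column spans (hence block-diagonalizability by row operations, which the paper packages as Lemma~\ref{lm:tdP} via the basis labelling the vertices of a deletion-decomposition tree) and the deleted element reattached at the root of the tree-depth forest. The only loose end is in the lower bound: since a connected primal graph does not force $M(A)$ to be connected (e.g.\ two independent columns sharing a row), your step $\dd(M(A))\le 1+\dd(M(A)\setminus v)$ needs the easy auxiliary fact that deletion-depth grows by at most one under deleting an element of a \emph{possibly disconnected} matroid, whereas the paper orders the case analysis by matroid components first (showing that a connected matroid has a connected primal graph), so that only the literal recursive definition is ever invoked.
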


\begin{theorem}
\label{thm:tdI}
For every matrix $A$,
it holds that
the smallest incidence tree-depth of a matrix row-equivalent to $A$ is equal to contraction$^*$-deletion-depth of $M(A)$ increased by one,
i.e., $\td_I^*(A)=\csdd(A)+1$.
\end{theorem}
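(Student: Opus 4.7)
The plan is to establish both inequalities $\td_I^*(A) \le \csdd(M(A)) + 1$ and $\td_I^*(A) \ge \csdd(M(A)) + 1$ by induction, mirroring the strategies used for Theorems~\ref{thm:orig-eq} and~\ref{thm:tdP} but accounting for both types of matroid operations simultaneously. The starting observation is that row operations on $A$ do not change $M(A)$ and so preserve $\csdd(M(A))$, whereas they do alter the incidence graph of $A$; hence the equivalent matrix achieving $\td_I^*(A)$ should be obtained by row operations tailored to a prescribed $\csdd$-decomposition.

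For the upper bound, I would fix a contraction$^*$-deletion decomposition of $M(A)$ of depth $d$ and build an equivalent matrix $A'$ together with a tree-depth decomposition of its incidence graph of depth $d+1$ recursively along the tree. A deletion step ``remove element $e$'' is implemented by placing the column vertex $c_e$ in the decomposition at the corresponding level. A contraction$^*$ step is implemented by pivoting on some non-zero entry $A_{re}$: row $r$ is used to clear the rest of column $e$, and the submatrix obtained by removing column $e$ and row $r$ represents $M(A)/e$ up to loops (which correspond to columns that become zero and hence to isolated column vertices, placeable as free leaves). In the decomposition, row $r$ sits above the subtree produced by the recursion, with $c_e$ attached as its child; remaining unpivoted rows become isolated in the incidence graph and are placed under a single extra root, contributing the extra $+1$.

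For the lower bound, I would start from any matrix $A'$ equivalent to $A$ achieving $\td_I(A') = \td_I^*(A)$, together with an optimal tree-depth decomposition $F$ of its incidence graph of depth $d+1$. Processing the vertices of $F$ from the root downwards, I would extract a $\csdd$-decomposition of $M(A') = M(A)$: a column vertex $c_e$ yields a deletion of $e$, and a row vertex $r$ yields a contraction$^*$ of some element $e$ whose column has a non-zero entry in row $r$ and which is still present in the matroid. The tree-depth property of $F$ forces every still-active such column vertex to be a descendant of $r$, which guarantees that an admissible choice of $e$ exists. The topmost vertex along each root-to-leaf branch does not need to contribute a matroid step, since by then the remaining submatroid is empty; this accounts for the $-1$ in going from $d+1$ to $d$.

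The main obstacle will be the careful synchronisation between the tree-depth decomposition of the incidence graph and the matroid decomposition in the lower-bound direction. Specifically, one must verify that when a row vertex $r$ is processed a suitable non-loop element to contract$^*$ exists (which may require an inductive argument on the components of the submatroid still alive), that loops created by successive contractions in the matroid correspond exactly to column vertices of the incidence graph whose neighbours have already been processed (so that they can indeed be handled for free), and that disconnected submatroids arising during the recursion can be treated independently in a manner consistent with the forest structure of $F$. These considerations parallel those in the proofs of Theorems~\ref{thm:orig-eq} and~\ref{thm:tdP}, but handling both operations simultaneously makes the bookkeeping considerably more delicate.
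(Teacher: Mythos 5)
Your overall architecture (two inequalities, induction along a decomposition, column vertices corresponding to deletions and row vertices to contractions) matches the paper's proof, but the way you implement the row-vertex/contraction correspondence is wrong in both directions, and this is not mere bookkeeping. In the lower bound, when the root of the tree-depth decomposition $F$ of the incidence graph of $A'$ is a row vertex $r$, the correct matroid operation is to factor by the one-dimensional \emph{coordinate} subspace $\langle e_r\rangle$ --- equivalently, to delete row $r$ from the matrix --- which is permitted precisely because contraction$^*$ allows factoring by an arbitrary one-dimensional subspace, not only by the span of an element. You instead contract an element $e$ whose column is non-zero in row $r$. That is a different quotient: $M/\langle \mathrm{col}_e\rangle$ is not the column matroid of $A'$ with row $r$ removed, so its components need not align with the subtrees of $F$ below $r$ (a single component of the quotient can straddle two subtrees), and the induction hypothesis --- which relates the $\csdd$ of the column matroid of a matrix to the tree-depth of that matrix's incidence graph --- can no longer be invoked. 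Your worry about whether ``a suitable non-loop element to contract exists'' is a symptom of having chosen the wrong operation; with $\langle e_r\rangle$ no such element is needed and the recursion closes cleanly.

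The same issue resurfaces in the upper bound: a contraction$^*$ step in an optimal $\csdd$-decomposition may factor by a subspace that is not spanned by any column of $A$, so your implementation by ``pivoting on $A_{re}$'' realizes only principal contractions and would bound $\td_I^*(A)$ by the principal variant of $\csdd(A)$ plus one, which may be strictly larger. The fix, as in the paper, is to perform row operations making the contracted subspace a coordinate vector and to place the corresponding \emph{row} vertex at the top of the subtree; no column is pivoted or attached as a child. Finally, your upper-bound recursion is silent on the disconnected case: when $M(A)$ splits into components, one must first block-diagonalize $A$ by row operations so that distinct components occupy disjoint sets of rows; otherwise the incidence graphs of the components share row vertices and the recursively built forests cannot be merged without increasing the height.
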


\subsubsection{Interplay of circuit and Graver basis complexity}

Graver bases play an essential role in designing efficient algorithms for integer programming.
We show that the maximum $\ell_1$-norm of a circuit of a matrix $A$ and
the maximum $\ell_1$-norm of an element of the Graver basis of $A$,
which are denoted by $c_1(A)$ and $g_1(A)$, respectively, are functionally equivalent.

\begin{theorem}
\label{thm:equiv1}
There exists a function $f_1:\NN\to\NN$ such that the following holds for every rational matrix $A$ with $\dim\ker A>0$:
\[ c_1(A)\le g_1(A)\le f_1(c_1(A)). \]
\end{theorem}

The parameter $c_1(A)$ can be related to dual tree-depth and entry complexity as follows (we have opted
throughout the paper to use entry complexity rather than $\|A\|_\infty$ as this permits to formulate
our results for rational matrices rather than integral matrices, which is occasionally more convenient).

\begin{theorem}
\label{thm:equiv2}
Every rational matrix $A$ with $\dim\ker A>0$
is row-equivalent to a rational matrix $A'$ with $\td_D(A')\le c_1(A)^2$ and $\ec(A')\le 2\lceil \log_2 (c_1(A)+1)\rceil$.
\end{theorem}

Our results together with Theorem~\ref{thm:boundg} imply that
the following statements are equivalent for every rational matrix $A$:
\begin{itemize}
\item The $\ell_1$-norm of every circuit of $A$, i.e., $c_1(A)$, is bounded.
\item The $\ell_1$-norm of every element of the Graver basis of $A$, i.e., $g_1(A)$, is bounded.
\item The matrix $A$ is row-equivalent to a matrix with bounded dual tree-depth and bounded entry complexity.
\item The contraction$^*$-depth of the matroid $M(A)$ is bounded, and
      the matrix $A$ is row-equivalent to a matrix with bounded entry complexity (with any dual tree-depth).
\end{itemize}

\subsubsection{Algorithms to compute matrices with small depth parameters}

We also construct parameterized algorithms for transforming an input matrix to a row-equivalent matrix
with small tree-depth and entry complexity if one exists. 
First, we design a parameterized algorithm for computing
a row-equivalent matrix with small primal tree-depth and small entry complexity if one exists.

\begin{theorem}
\label{thm:alg-tdP}
There exists a function $f:\NN^2\to\NN$ and a fixed parameter algorithm for the parameterization by $d$ and $e$ that
for a given rational matrix $A$:
\begin{itemize}
\item either outputs that $A$ is not row-equivalent to a matrix with primal tree-depth at most $d$ and entry complexity at most $e$, or
\item outputs a matrix $A'$ that is row-equivalent to $A$, its primal tree-depth is at most $d$ and entry complexity is at most $f(d,e)$.
\end{itemize}
\end{theorem}

The following algorithm for computing
a row-equivalent matrix with small dual tree-depth was presented in~\cite{ChaCKKP19,ChaCKKP20}.

\begin{theorem}
\label{thm:orig-alg}
There exists a function $f:\NN^2\to\NN$ and a fixed parameter algorithm for the parameterization by $d$ and $e$ that
for a given rational matrix $A$ with entry complexity at most $e$:
\begin{itemize}
\item either outputs that $A$ is not row-equivalent to a matrix with dual tree-depth at most $d$, or
\item outputs a matrix $A'$ that is row-equivalent to $A$, its dual tree-depth is at most $d$ and entry complexity is at most $f(d,e)$.
\end{itemize}
\end{theorem}

We improve the algorithm by replacing the parameterization by the entry complexity of an input matrix
with the parameterization by the entry complexity of the to be constructed matrix.
Note that if a matrix $A$ has entry complexity $e$ and is row-equivalent to a matrix with dual tree-depth $d$,
then Theorem~\ref{thm:orig-alg} yields that
$A$ is row-equivalent to a matrix with dual tree-depth $d$ and entry complexity bounded by a function of $d$ and $e$.
Hence, the algorithm given below applies to a wider set of input matrices than the algorithm from Theorem~\ref{thm:orig-alg}.

\begin{theorem}
\label{thm:alg-tdD}
There exists a function $f:\NN^2\to\NN$ and a fixed parameter algorithm for the parameterization by $d$ and $e$ that,
for a given rational matrix $A$:
\begin{itemize}
\item either outputs that $A$ is not row-equivalent to a matrix with dual tree-depth at most $d$ and entry complexity at most $e$, or
\item outputs a matrix $A'$ that is row-equivalent to $A$, its dual tree-depth is at most $d$ and entry complexity is at most $f(d,e)$.
\end{itemize}
\end{theorem}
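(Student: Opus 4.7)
The plan is to stack two preconditioning steps. The outer step applies Theorem~\ref{thm:orig-alg}, whose output reduces the dual tree-depth to the optimum $\td_D^*$ but whose guarantees only control the output entry complexity \emph{relative to that of the input}. So I first need an inner step that produces \emph{some} matrix equivalent to $A$ whose dual tree-depth \emph{and} entry complexity are both already bounded by a function of $d$ and $e$; for that I use circuits and Theorem~\ref{thm:equiv}.

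The inner step exploits the invariance of $c_1$. Since circuits of $A$ are determined by $\ker A$, the quantity $c_1(A)$ is preserved by row operations. If $A$ is equivalent to some $B$ with $\td_D(B)\le d$ and $\ec(B)\le e$, then Theorem~\ref{thm:boundg} gives $g_1(B)\le f_D(d,e)$, and since $c_1\le g_1$ we obtain the a priori bound $c_1(A)=c_1(B)\le f_D(d,e)$. Under this bound, the first part of Theorem~\ref{thm:equiv} asserts the existence of an equivalent matrix $A_1$ with $\td_D(A_1)\le c_1(A)^2\le f_D(d,e)^2$ and $\ec(A_1)\le 2\lceil c_1(A)\rceil\le 2\lceil f_D(d,e)\rceil$. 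To make this algorithmic, I would invoke a constructive form of Theorem~\ref{thm:equiv} (established in the course of its proof): given $A$ and a guess $c\in\NN$, either output such an equivalent $A_1$ with $\td_D(A_1)\le c^2$ and $\ec(A_1)\le 2\lceil c\rceil$, or certify that $c_1(A)>c$. The algorithm tries $c=1,2,\ldots,\lceil f_D(d,e)\rceil$ in turn; if all attempts fail, the contrapositive of the argument above is a correct certificate that $A$ has no equivalent with $\td_D\le d$ and $\ec\le e$.

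The outer step then feeds the matrix $A_1$ obtained above into the algorithm of Theorem~\ref{thm:orig-alg} with target dual tree-depth $d$. Since $\td_D^*(A_1)=\td_D^*(A)$ by the definition of $\td_D^*$, that algorithm either reports $\td_D^*(A)>d$ (in which case we correctly output the ``no'' answer) or returns a matrix $A'$ equivalent to $A_1$ (hence to $A$) with $\td_D(A')\le d$ and $\ec(A')=\calO(d^2 2^{2d}\ec(A_1))$, which is bounded by a function $f(d,e)$ of $d$ and $e$. Note that the inputs to Theorem~\ref{thm:orig-alg} are bounded in $d$ and $e$, so the whole procedure runs in FPT time.

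The main obstacle is the algorithmic content of Theorem~\ref{thm:equiv}: existence of a good $A_1$ from a bound on $c_1(A)$ does not immediately yield an FPT construction. I would expect the paper's proof of Theorem~\ref{thm:equiv} to produce such an algorithm by first computing a decomposition of the column matroid $M(A)$ witnessing $\csd(M(A))\le c_1(A)^2$ (using that $\td_D^*(A)=\csd(M(A))$ from Theorem~\ref{thm:orig-eq} together with the matroid-oracle algorithm behind Theorem~\ref{thm:orig-alg}), and then realizing the corresponding row operations on $A$ while keeping entries of the new matrix integer combinations of a few circuits, which bounds their size by roughly $c_1(A)$. The certification direction---refuting a guess $c$ by showing $c_1(A)>c$---is equally delicate, but can be reduced to testing the matroid-theoretic parameter $\csd(M(A))$ rather than enumerating circuits explicitly.
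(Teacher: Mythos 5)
Your proposal follows essentially the same route as the paper: its proof of Theorem~\ref{thm:alg-tdD} sets $k=f_D(d,e)$, runs the circuit-based preconditioner (Theorem~\ref{thm:alg-tdD-c1}, which is exactly the algorithmic form of the first part of Theorem~\ref{thm:equiv} you ask for, built from the principal contraction$^*$-decomposition tree of Theorem~\ref{thm:circuit}), and then feeds the result into Theorem~\ref{thm:orig-alg} --- precisely your inner/outer composition with the same $c_1\le g_1\le f_D(d,e)$ certification argument. The only cosmetic difference is that the paper's inner step is a single polynomial-time run whose output is checked against the bound $k$, rather than a loop over guesses $c=1,\dots,\lceil f_D(d,e)\rceil$.
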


We point out the following difference between the cases of primal and dual tree-depth.
As mentioned, if a matrix $A$ has entry complexity $e$ and is row-equivalent to a matrix with dual tree-depth $d$,
then $A$ is row-equivalent to a matrix with dual tree-depth $d$ and entry complexity bounded by a function of $d$ and $e$.
However, the same is not true in the case of primal tree-depth.
The entry complexity of every matrix with primal tree-depth equal to one that is row-equivalent to the following matrix $A$
is linear in the number of rows of $A$,
quite in a contrast to the case of dual tree-depth.
\[\left(\begin{matrix}
  1 & 2 & 0 & 0 & \cdots & 0 & 0 & 0 & 0 \\
  0 & 1 & 2 & 0 & \cdots & 0 & 0 & 0 & 0 \\
  0 & 0 & 1 & 2 & \cdots & 0 & 0 & 0 & 0 \\
  \vdots & \vdots & & \ddots & \ddots & & & \vdots & \vdots \\
  \vdots & \vdots & & & \ddots & \ddots & & \vdots & \vdots \\
  0 & 0 & 0 & 0 & \cdots & 1 & 2 & 0 & 0 \\
  0 & 0 & 0 & 0 & \cdots & 0 & 1 & 2 & 0 \\
  0 & 0 & 0 & 0 & \cdots & 0 & 0 & 1 & 2
  \end{matrix}\right)
  \]

\subsubsection{Fixed parameter algorithms for integer programming}

One of the open problems in the area, e.g.~discussed during the Dagstuhl workshop 19041 ``New Horizons in Parameterized Complexity'',
has been whether integer programming is fixed parameter tractable when parameterized by $g_1(A)$,
i.e., by the $\ell_1$-norm of an element of the Graver basis of the constraint matrix $A$.
Our results on the interplay of dual tree-depth, the circuit complexity and the Graver basis complexity of a matrix
yield an affirmative answer.
The existence of appropriate preconditioners that we establish in this paper
implies that integer programming is fixed parameter tractable when parameterized by
\begin{itemize}
\item $g_1(A)$, i.e., the $\ell_1$-norm of the Graver basis of the constraint matrix,
\item $c_1(A)$, i.e., the $\ell_1$-norm of the circuits of the constraint matrix,
\item $\td^*_P(A)$ and $\ec(A)$, i.e., the smallest primal tree-depth and entry complexity of a matrix row-equivalent to the constraint matrix, and
\item $\td^*_D(A)$ and $\ec(A)$, i.e., the smallest dual tree-depth and entry complexity of a matrix row-equivalent to the constraint matrix.
\end{itemize}
We believe that our new tractability results significantly enhance the toolbox of tractable IPs as
the nature of our tractability conditions substantially differ from prevalent block-structured sparsity-based tractability conditions.
The importance of availability of various forms of tractable IPs can be witnessed by $n$-fold IPs,
which were shown fixed-parameter tractable in~\cite{HemOR13}, and,
about a decade later,
their applications has become ubiquitous,
see e.g.~\cite{BreKKN19,BreFKKN21,CheM18,CheCZ21,HerMNS21,JanLM20,JanKMR21,KnoK18, KnoKM17_2}.

\subsubsection{Hardness results}

As our algorithmic results involve computing depth decompositions of matroids for various depth parameters in a parameterized way,
we establish computational hardness of these parameters in Theorem~\ref{thm:NPc}, primarily for the sake of completeness of our exposition.
In particular, computing the following matroid parameters is NP-complete:
\begin{itemize}
\item deletion-depth,
\item contraction-depth,
\item contraction-deletion-depth,
\item contraction$^*$-depth, and
\item contraction$^*$-deletion-depth.
\end{itemize}

\section{Preliminaries}
\label{sec:prelim}

In this section, we fix the notation used throughout the paper.
We start with general notation and we then fix the notation related to graphs, matrices and matroids.

The set of all positive integers is denoted by $\NN$ and
the set of the first $k$ positive integers by $[k]$.
If $A$ is a linear space, we write $\dim A$ for its dimension.
If $K$ is a subspace of $A$,
the \emph{quotient space} $A/K$ is the linear space of the dimension $\dim A-\dim K$ that
consists of cosets of $A$ given by $K$ with the natural operations of addition and scalar multiplication;
see e.g.~\cite{Hal93} for further details.
The quotient space $A/K$ can be associated with a linear subspace of $A$ of dimension $\dim A-\dim K$
formed by exactly a single vector from each coset of $A$ given by $K$;
we will often view the quotient space as such a subspace of $A$ and write $w+K$ for the coset containing a vector $w$.
For example, if $A$ is $\RR^3$ and $K$ is the linear space generated by $(0,0,1)$,
$A/K$ can be associated with (or viewed as) the $2$-dimensional space formed by vectors $(x,y,0)$, $x,y\in\RR$.

\subsection{Graphs}
\label{subsec:graph}

All graphs considered in this paper are loopless simple graphs unless stated otherwise.
If $G$ is a graph, then we write $V(G)$ and $E(G)$ for the vertex set and the edge set of $G$, respectively.
If $W$ is a subset of vertices of a graph $G$,
then $G\setminus W$ is the graph obtained by removing the vertices of $W$ (and all edges incident with them), and
$G[W]$ is the graph obtained by removing all vertices not contained in $W$ (and all edges incident with them).
If $F$ is a subset of edges of a graph $G$,
then $G\setminus F$ is the graph obtained by removing the edges contained in $F$ and
$G/F$ is the graph obtained by contracting all edges contained in $F$ and removing resulting loops and parallel edges (while
keeping one edge from each group of parallel edges).

We next define the graph parameter \emph{tree-depth}, which is the central graph parameter in this paper.
The \emph{height} of a rooted tree is the maximum number of vertices on a path from the root to a leaf, and
the \emph{height} of a rooted forest, i.e., a graph whose each component is a rooted tree,
is the maximum height of its components.
The \emph{depth} of a rooted tree is the maximum number of edges on a path from the root to a leaf, and
the \emph{depth} of a rooted forest is the maximum depth of its components.
Note that the height and the depth of a rooted tree always differ by one;
we use both notions to avoid cumbersome way of expressing that would otherwise require adding or subtracting one.
The \emph{closure~$\cl(F)$} of a rooted forest $F$ is the graph obtained by adding edges from each vertex to all its descendants.
Finally,
the \emph{tree-depth~$\td(G)$} of a graph~$G$ is the minimum height of a rooted forest~$F$ such that
the closure~$\cl(F)$ of the rooted forest~$F$ contains~$G$ as a subgraph.
See Figure~\ref{fig:tdepth} for an example.
It can be shown that the path-width, and so the tree-width, of any graph is at most its tree-depth decreased by one;
see e.g.~\cite{CygFKLMPPS15} for a more detailed discussion of the relation of tree-depth, path-width and tree-width, and
their algorithmic applications.

\begin{figure}
\begin{center}
\epsfbox{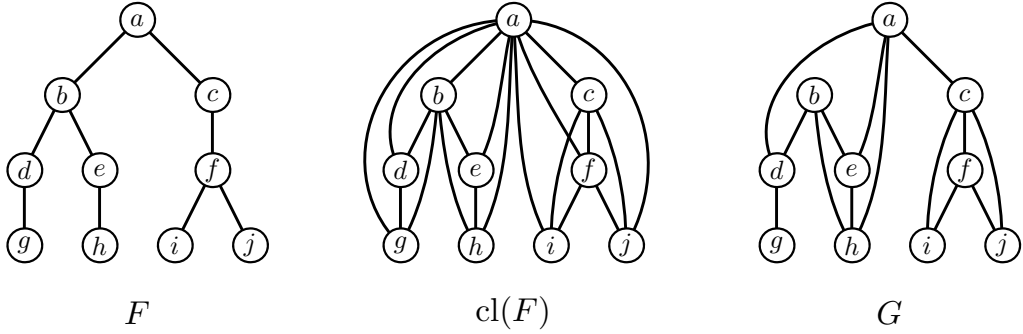}
\end{center}
\caption{A rooted forest $F$ consisting of a single tree and its closure $\cl(F)$,
         which shows that tree-depth of the depicted graph $G$ is (at most) four.}
\label{fig:tdepth}
\end{figure}

\subsection{Matroids}

We next review basic definitions from matroid theory;
we refer to the book of Oxley~\cite{Oxl11} for detailed exposition.
A \emph{hereditary} collection of subsets of a set is a collection closed under taking subsets;
in particular, every non-empty hereditary collection of subsets contains the empty set.
A \emph{matroid} $M$ is a pair~$(X,\II)$,
where~$\II$ is a non-empty hereditary collection of subsets of~$X$ that satisfies the \emph{augmentation axiom},
i.e., if $X'\in\II$, $X''\in\II$ and $|X'|<|X''|$,
then there exists an element~$x\in X''\setminus X'$ such that $X'\cup\{x\}\in\II$.
The set $X$ is the \emph{ground set} of $M$ and the sets contained in~$\II$ are referred to as \emph{independent}.
we often refer to elements of the ground set of $M$ to as elements of the matroid $M$, and
if $e$ is an element of (the ground set of) $M$, we also write $e\in M$.
Two important examples of matroids are vector matroids and graphic matroids.
A \emph{vector matroid} is a matroid whose ground set is formed by vectors and
independent sets are precisely sets of linearly independent vectors (note that the augmentation axiom follows
from the Steinitz exchange lemma).
A \emph{graphic matroid} is a matroid whose ground set is formed by edges of a graph and
independent sets are precisely acyclic sets of edges, i.e., sets not containing a cycle.

The \emph{rank} of a subset $X'$ of the ground set $X$,
which is denoted by $r_M(X')$ or simply by $r(X')$ if $M$ is clear from the context,
is the maximum size of an independent subset of~$X'$ (it can be shown that
all maximal independent subsets of $X'$ have the same cardinality);
the \emph{rank} of the matroid $M$, which is denoted by $r(M)$, is the rank of its ground set.
Note that in the case of vector matroids,
the rank of $X'$ is exactly the dimension of linear space generated by $X'$.
A \emph{basis} of a matroid $M$ is a maximal independent subset of the ground set of $M$ and
a \emph{circuit} is a minimal subset of the ground set of $M$ that is not independent.
In particular, if $X'$ is a circuit of $M$, then $r(X')=|X'|-1$ and every proper subset of $X'$ is independent.
An element $x$ of a matroid $M$ is a \emph{loop} if $r(\{x\})=0$, 
an element $x$ is a \emph{bridge} if it is contained in every basis of $M$, and
two elements $x$ and $x'$ are \emph{parallel} if~$r(\{x\})=r(\{x'\})=r(\{x,x'\})=1$.
Note that in the case of vector matroids,
two non-loop elements are parallel if and only if they are non-zero multiple of each other.
If $M$ is a matroid with ground set $X$,
the \emph{dual matroid}, which is denoted by $M^*$ is the matroid with the same ground set $X$ such that
$X'\subseteq X$ is independent in $M^*$ if and only if $r_M(X\setminus X')=r(M)$;
in particular, $r_{M^*}(X')=r_M(X\setminus X')+|X'|-r(M)$ for every $X'\subseteq X$.

For a field $\FF$, we say that a matroid $M$ is \emph{$\FF$-representable}
if every element of $M$ can be assigned a vector from $\FF^{r(M)}$ in such a way that
a subset of the ground set of $M$ is independent if and only if the set of assigned vectors is linearly independent.
In particular, an element of $M$ is a loop if and only if it is assigned the zero vector and
two non-loop elements of $M$ are parallel if and only if they are non-zero multiples of each other.
Such an assignment of vectors of $\FF^{r(M)}$ to the elements of $M$ is an \emph{$\FF$-representation} of $M$.
Clearly, a matroid $M$ is $\FF$-representable if and only if
it is isomorphic to the vector matroid given by its $\FF$-representation.
Matroids representable over the $2$-element field are referred to as \emph{binary} matroids.
We say that a matroid $M$ is \emph{$\FF$-represented}
if the matroid $M$ is given by its $\FF$-representation.
If a particular field $\FF$ is not relevant in the context,
we just say that a matroid $M$ is \emph{represented} to express that it is given by its representation.

Let $M$ be a matroid with a ground set $X$.
The matroid $kM$ for $k\in\NN$ is the matroid obtained from $M$ 
by introducing $k-1$ parallel elements to each non-loop element and $k-1$ additional loops for each loop;
informally speaking, every element of $M$ is ``cloned'' to $k$ copies.
Note that a subset $X'$ of the elements of $kM$ is independent if and only if
it does not contains two clones of the same element and
the set of the elements of $M$ corresponding to those contained in $X'$ is independent.
Observe that if $M$ is a vector matroid,
then $kM$ is the vector matroid obtained by adding $k-1$ copies of each vector forming $M$.
Similarly, if $M$ is a graphic matroid associated with a graph $G$,
then $kM$ is the graphic matroid obtained from the graph $G$ by duplicating each edge $k-1$ times.

If $X'\subseteq X$, then the \emph{restriction} of $M$ to $X'$, which is denoted by $M\left[X'\right]$,
is the matroid with the ground set $X'$ such that a subset of $X'$
is independent in $M\left[X'\right]$ if and only if it is independent in $M$.
In particular, the rank of $M\left[X'\right]$ is $r_M(X')$.
For example, if $M$ is a graphic matroid associated with a graph $G$,
then the restriction of $M$ to $X'$ is the graphic matroid associated with the spanning subgraph of $G$ with edge set $X'$.
The matroid obtained from $M$ by \emph{deleting} $X'$ is the restriction of $M$ to $X\setminus X'$ and is denoted by $M\setminus X'$.

The \emph{contraction} of $M$ by $X'$, which is denoted by $M/X'$,
is the matroid with the ground set $X\setminus X'$ such that
a subset $X''$ of $X\setminus X'$ is independent in~$M/X'$ if and only if~$r_M(X''\cup X')=\lvert X''\rvert+r_M(X')$.
If $X'$ is a single element set and $e$ is its only element,
we write $M\setminus e$ and $M/e$ instead of $M\setminus\{e\}$ and $M/\{e\}$, respectively.
If $M$ is a graphic matroid associated with a graph $G$ and $e$ is an edge of $G$,
then $M/e$ is the graphic matroid associated with the graph obtained from $G$ by contracting the edge $e$ (while keeping all resulting loops and parallel edges).
If an $\FF$-representation of $M$ is given and $X'$ is a subset of the ground set of $M$,
then an $\FF$-representation of $M/X'$ can be obtained from the $\FF$-representation of $M$
by considering the representation in the quotient space by the linear hull of the vectors representing the elements of $X'$.
This leads us to the following definition:
if $M$ is an $\FF$-represented matroid and $A$ is a linear subspace of $\FF^{r(M)}$,
then the matroid $M/A$ is the $\FF$-represented matroid with the representation of $M$ in the quotient space by $A$.
Note that the ground sets of $M$ and $M/A$ are the same,
in particular, $M$ and $M/A$ have the same number of elements.

A matroid~$M$ is \emph{connected} if every two distinct elements of~$M$ are contained in a common circuit.
We remark that the property of being contained in a common circuit is transitive~\cite[Proposition 4.1.2]{Oxl11},
i.e., if the pair elements $e$ and $e'$ are contained in a common circuit and
the pair $e'$ and $e''$ is also contained in a common circuit,
then the pair $e$ and $e''$ is also contained in a common circuit.
If $M$ is an $\FF$-represented matroid with at least two elements,
then $M$ is connected if and only if $M$ has no loops and
there do not exist two non-trivial linear spaces~$A$ and $B$ of $\FF^{r(M)}$ such that
$A\cap B$ contains the zero vector only and every element of $M$ is contained in $A$ or $B$ (the linear space $\FF^{r(M)}$
would be the direct sum of the linear spaces $A$ and $B$).
Also observe that if $M$ is a graphic matroid associated with a graph $G$,
then the matroid $M$ is connected if and only if the graph $G$ is $2$-connected.

A \emph{component} of a matroid $M$ is an inclusion-wise maximal connected restriction of $M$;
a component is \emph{trivial} if it consists of a single loop, and it is \emph{non-trivial} otherwise.
If $M$ is a vector matroid, then each non-trivial component of $M$ can be associated with a linear space such that
each element of $M$ is contained in one of the linear spaces and
the linear hull of all elements of $M$ is the direct sum of the linear spaces.
We often identify components of a matroid $M$ with their element sets.
Using this identification,
it holds that a subset $X'$ of a ground set of a matroid $M$ is a component of $M$
if and only if $X'$ is a component of $M^*$ (we use this equivalence
to prove some of our hardness results in Section~\ref{sec:hardness}).
We remark that $(M^*)^*=M$ for every matroid $M$, and
if $e$ is an element of a matroid $M$,
then $(M/e)^*=M^*\setminus e$ and $(M\setminus e)^*=M^*/e$.

\subsection{Matrices}

In this section, we define notation related to matrices.
If $\FF$ is a field, we write $\FF^{m\times n}$ for the set of matrices with $m$ rows and $n$ columns over the field $\FF$.
If $A$ is a rational matrix,
the entry complexity $\ec(A)$ is the maximum length of a binary encoding of its entries,
i.e., the maximum of $\left\lceil\log_2\left(|p|+1\right)\right\rceil+\left\lceil\left(\log_2 |q|+1\right)\right\rceil$
taken over all entries $p/q$ of $A$ (where $p$ and $q$ are always assumed to be coprime).
If $A$ is an integral matrix, then $\ec(A)=\Theta\left(\log\|A\|_{\infty}\right)$.
Throughout the paper, we use the entry complexity rather than the $\ell_\infty$-norm of matrices as
this permits formulating our results for rational matrices rather than integral matrices only.

A rational matrix $A$ is \emph{$z$-integral} for $z\in\QQ$
if every entry of $A$ is an integral multiple of $z$.
We say that two matrices $A$ and $A'$ are \emph{row-equivalent}
if one can be obtained from another by \emph{elementary row operations},
i.e., by repeatedly adding a multiple of one row to another and multiplying a row by a non-zero element.
Observe that if $A$ and $A'$ are row-equivalent matrices, then their kernels are the same.
For a matrix $A$, we define $M(A)$ to be the represented matroid whose elements are the columns of $A$.
Again, if matrices $A$ and $A'$ are row-equivalent, then the matroids $M(A)$ and $M(A')$ are the same.

\begin{figure}
\begin{center}
\epsfbox{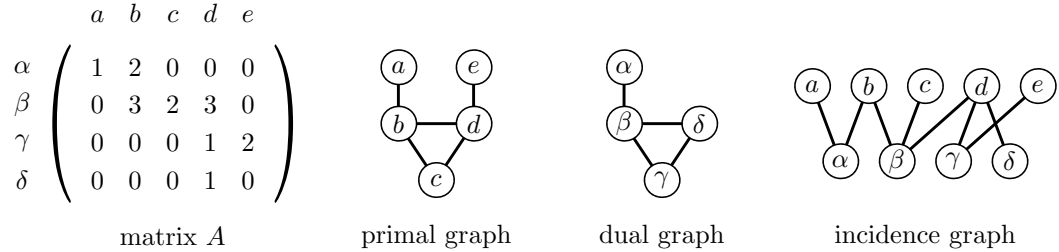}
\end{center}
\caption{The primal graph, the dual graph and the incidence graph of the depicted matrix $A$.}
\label{fig:matrix}
\end{figure}

If $A$ is a matrix, the \emph{primal graph} of $A$ is the graph whose vertices are columns of $A$ and
two vertices are adjacent if there exists a row having non-zero elements in the two columns associated with the vertices;
the \emph{dual graph} of $A$ is the graph whose vertices are rows of $A$ and
two vertices are adjacent if there exists a column having non-zero elements in the two associated rows;
the \emph{incidence graph} of $A$ is the bipartite graph with one part formed by rows of $A$ and the other part by columns of $A$ and
two vertices are adjacent if the entry in the associated row and in the associated column is non-zero.
See Figure~\ref{fig:matrix} for an example.
The \emph{primal tree-depth} of $A$, denoted by $\td_P(A)$, is the tree-depth of the primal graph of $A$,
the \emph{dual tree-depth} of $A$, denoted by $\td_D(A)$, is the tree-depth of the dual graph of $A$, and
the \emph{incidence tree-depth} of $A$, denoted by $\td_I(A)$, is the tree-depth of the incidence graph of $A$.
Finally, $\td_P^*(A)$ is the smallest primal tree-depth of a matrix row-equivalent to $A$,
$\td_D^*(A)$ is the smallest dual tree-depth of a matrix row-equivalent to $A$, and
$\td_I^*(A)$ is the smallest incidence tree-depth of a matrix row-equivalent to $A$.

A \emph{circuit} of a rational matrix $A$
is a support-wise minimal integral vector contained in the kernel of $A$ such that all its entries are coprime;
the set of circuits of $A$ is denoted by $\C(A)$.
Note that a set $X$ of columns is a circuit in the matroid $M(A)$ if and only if
$\C(A)$ contains a vector with the support exactly equal to $X$.
We write $c_1(A)$ for the maximum $\ell_1$-norm of a circuit of $A$ and
$c_{\infty}(A)$ for the maximum $\ell_{\infty}$-norm of a circuit of $A$.
Note if $A$ and $A'$ are row-equivalent rational matrices, then $\C(A)=\C(A')$ and
so the parameters $c_1(\cdot)$ and $c_\infty(\cdot)$ are invariant under elementary row operations.
Following the notation from~\cite{EkbNV21},
we write $\dot{\kappa}_A$ for the least common multiple of the entries of the circuits of $A$.
Observe that there exists a function $f:\NN\to\NN$ such that $\dot{\kappa}_A\le f(c_{\infty}(A))$ for every matrix $A$.

If $\vex$ and $\vey$ are two $d$-dimensional vectors,
we write $\vex \sqsubseteq \vey$
if $|\vex_i|\le|\vey_i|$ for all $i\in [d]$ and
$\vex$ and $\vey$ are in the same orthant,
i.e., $\vex_i$ and $\vey_i$ have the same sign (or one or both are zero) for all $i\in [d]$.
The \emph{Graver basis} of a matrix $A$, denoted by $\G(A)$,
is the set of the $\sqsubseteq$-minimal non-zero elements of the integer kernel $\ker_{\ZZ}(A)$.
We use $g_1(A)$ and $g_{\infty}(A)$ for the Graver basis of $A$ analogously to the set of circuits,
i.e., $g_1(A)$ is the maximum $\ell_1$-norm of a vector in $\G(A)$ and
$g_{\infty}(A)$ is the maximum $\ell_{\infty}$-norm of a vector in $\G(A)$.
Again, the parameters $g_1(\cdot)$ and $g_\infty(\cdot)$ are invariant under elementary row operations as
the Graver bases of row-equivalent matrices are the same.
Note that every circuit of a matrix $A$ belongs to the Graver basis of $A$, i.e., $\C(A)\subseteq\G(A)$, and
so it holds that $c_1(A)\le g_1(A)$ and $c_{\infty}(A)\le g_{\infty}(A)$ for every matrix $A$.

The existence of efficient algorithms for integer programming with of constraint matrices $A$ with bounded primal and dual tree-depth
is closely linked to bounds on the norm of elements of the Graver basis of $A$.
In particular, Kouteck\'y, Levin and Onn~\cite{KouLO18} established the following.

\begin{theorem}
\label{thm:boundg}
There exist functions $f_P,f_D:\NN^2\to\NN$ such that the following holds for every rational matrix $A$:
$g_{\infty}(A)\le f_P(\td_P(A),\ec(A))$ and $g_{1}(A)\le f_D(\td_D(A),\ec(A))$.
\end{theorem}

\subsection{Matroid depth parameters}

We now define matroid depth parameters that will be of importance further.
We start with the notion of deletion-depth and contraction-depth,
which were introduced by DeVos, Kwon and Oum~\cite{DevKO20}.

The \emph{deletion-depth} of a matroid $M$, denoted by $\dd(M)$, is defined recursively as follows:
\begin{itemize}
\item If $M$ has a single element, then $\dd(M)=1$.
\item If $M$ is not connected, then $\dd(M)$ is the maximum deletion-depth of a component of $M$.
\item Otherwise, $\dd(M)=1+\min\limits_{e\in M}\dd(M\setminus e)$,
      i.e., $\dd(M)$ is is 1 plus the minimum deletion-depth of $M\setminus e$
      where the minimum is taken over all elements $e$ of $M$.
\end{itemize}
A sequence of deletions of elements witnessing that the deletion-depth of a matroid $M$ is $\dd(M)$
can be visualized by a rooted tree, which we call a \emph{deletion-tree}, defined as follows.
If $M$ has a single element,
then the deletion-tree of $M$ consists of a single vertex labeled with the single element of $M$.
If $M$ is not connected,
then the deletion-tree is obtained by identifying the roots of deletion-trees of the components of $M$.
Otherwise, there exists an element $e$ of the matroid $M$ such that $\dd(M)=\dd(M\setminus e)+1$ and
the deletion-tree of $M$ is obtained from the deletion-tree of $M\setminus e$
by adding a new vertex adjacent to the root of the deletion-tree of $M\setminus e$,
changing the root of the tree to the newly added vertex and labeling the edge incident with it with the element $e$.
See Figure~\ref{fig:dctree} for an example.
Observe that the height of the deletion-tree is equal to the deletion-depth of $M$.
In what follows, we consider deletion-trees that need not to be of optimal height,
i.e., its edges can be labeled by a sequence of elements that decomposes a matroid $M$ in a way described
in the definition of the deletion-depth but its height is larger than $\dd(M)$.
In this more general setting, the deletion-depth of a matroid $M$ is the smallest height of a deletion-tree of $M$.

\begin{figure}
\begin{center}
\epsfbox{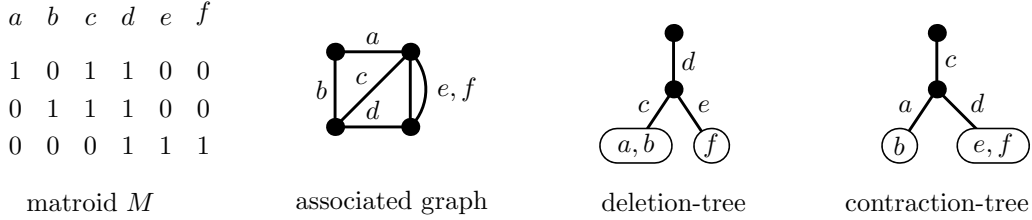}
\end{center}
\caption{A deletion-tree and a contraction-tree of the depicted binary matroid $M$,
         which is also the graphic matroid associated with the depicted graph.}
\label{fig:dctree}
\end{figure}

The \emph{contraction-depth} of a matroid $M$, denoted by $\cd(M)$, is defined recursively as follows:
\begin{itemize}
\item If $M$ has a single element, then $\cd(M)=1$.
\item If $M$ is not connected, then $\cd(M)$ is the maximum contraction-depth of a component of $M$.
\item Otherwise, $\cd(M)=1+\min\limits_{e\in M}\cd(M/e)$, 
      i.e., $\cd(M)$ is 1 plus the minimum contraction-depth of $M/e$ where the minimum is taken over all elements $e$ of $M$.
\end{itemize}      
It is not hard to show that $\dd(M)=\cd(M^*)$ and $\cd(M)=\dd(M^*)$ for every matroid $M$.
We define a \emph{contraction-tree} analogously to a deletion-tree;
the contraction-depth of a matroid $M$ is the smallest height of a contraction-tree of $M$ (an example
is given in Figure~\ref{fig:dctree}).

We next introduce the contraction-deletion-depth;
this parameter was studied under the name \emph{type} in~\cite{DinOO95},
however, we decided to adopt the name contraction-deletion-depth from~\cite{DevKO20},
which we find to better fit the context considered here.
The \emph{contraction-deletion-depth} of a matroid $M$, denoted by $\cdd(M)$, is defined recursively as follows:
\begin{itemize}
\item If $M$ has a single element, then $\cdd(M)=1$.
\item If $M$ is not connected, then $\cdd(M)$ is the maximum contraction-deletion-depth of a component of $M$.
\item Otherwise, $\cdd(M)=1+\min\limits_{e\in M}\min\{\cdd(M\setminus e),\cdd(M/e)\}$,
      i.e., $\cdd(M)$ is 1 plus the smaller among the minimum contraction-deletion-depth of the matroid $M\setminus e$ and
      the minimum contraction-deletion-depth of the matroid $M/e$ where both minima are taken over all elements $e$ of $M$.
\end{itemize}
Observe that it holds that $\cdd(M)=\cdd(M^*)$, $\cdd(M)\le\dd(M)$ and $\cdd(M)\le\cd(M)$ for every matroid $M$.

One of the key parameters in our setting is that of contraction$^*$-depth;
this parameter was introduced under the name branch-depth in~\cite{KarKLM17} and further studied in~\cite{ChaCKKP20}
but we decided to use a different name to avoid a possible confusion with the notion of branch-depth introduced in~\cite{DevKO20}.
We first introduce the parameter for general matroids, and
then present an equivalent definition for represented matroids,
which is more convenient to work in our setting.
The \emph{contraction$^*$-depth} of a matroid $M$, denoted by $\csd(M)$,
is the smallest depth of a rooted tree $T$ with exactly $r(M)$ edges with the following property:
there exists a function $f$ from the ground set of $M$ to the leaves of $T$ such that
for every subset $X$ of the ground set of $M$ the total number of edges contained in paths from the root to vertices of $X$
is at least $r(X)$.
An example is given in Figure~\ref{fig:csd}.

\begin{figure}
\begin{center}
\epsfbox{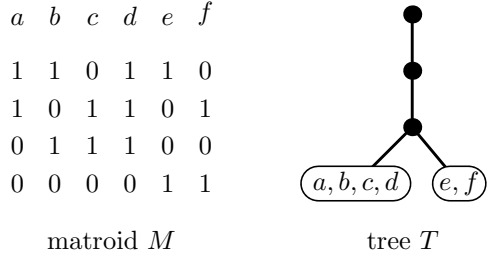}
\end{center}
\caption{A binary matroid $M$ of rank four given by its representation and
         a tree $T$ as in he definition of the contraction$^*$-depth.}
\label{fig:csd}
\end{figure}

There is an alternative definition of the parameter for represented matroids,
which also justifies the name that we use for the parameter.
The \emph{contraction$^*$-depth} of a represented matroid $M$ can be defined recursively as follows:
\begin{itemize}
\item If $M$ has rank zero, then $\csd(M)=0$.
\item If $M$ is not connected, then $\csd(M)$ is the maximum contraction$^*$-depth of a component of $M$.
\item Otherwise, $\csd(M)$ is 1 plus the minimum contraction$^*$-depth of a matroid obtained from the matroid $M$ by factoring along an arbitrary one-dimensional subspace.
\end{itemize}
As the contraction in the definition is allowed to be by an arbitrary one-dimensional subspace,
not only by a subspace generated by an element of $M$, it follows that $\csd(M)\le\cd(M)$.

The sequence of such contractions can be visualized by a \emph{contraction$^*$-tree} that
is defined in the same way as a contraction-tree except that
one-vertex trees are associated with matroids of rank zero (rather than matroids consisting of a single element) and
the edges are labeled by one-dimensional subspaces.
If each one-dimensional subspace that is the label of an edge of the tree
is generated by an element of the matroid $M$,
we say that the contraction$^*$-tree is \emph{principal} and
we view the edges of the tree as labeled by the corresponding elements of $M$.
Note that the minimum depth of a principal contraction$^*$-tree of a matroid $M$
is an upper bound on its contraction$^*$-depth,
however, in general, the contraction$^*$-depth of a matroid $M$ can be smaller than
the minimum depth of a principal contraction$^*$-tree of $M$.
We point out that the notions of principal contraction$^*$-trees and contraction-trees differ in a subtle but important way.
For example, if $M$ is a vector matroid of rank one containing a single element $e$,
its only contraction-tree consists of a root labeled by $e$
while its only contraction$^*$-tree has a root and a leaf adjacent to it and
the edge joining them is labeled with (the subspace generated by) $e$.
However, if $M$ is a vector matroid of rank one containing two parallel elements $e$ and $e'$,
any contraction-tree and any contraction$^*$-tree of $M$ has depth one.
Still, the minimum depth of a principal contraction$^*$-tree of $M$ is either $\cd(M)-1$ or $\cd(M)$.

Kardo\v s et al.~\cite{KarKLM17} established the connection between the contraction$^*$-depth and the existence of a long circuit,
which is described in Theorem~\ref{thm:circuit} below;
Theorem~\ref{thm:circuit} implies that $\cd(M)\le k^2+1$ where $k$ is the size of the largest circuit of $M$.

\begin{theorem}
\label{thm:circuit}
Let $M$ be a matroid and $k$ the size of its largest circuit.
It holds that $\log_2 k\le\csd(M)\le k^2$.
Moreover, there exists a polynomial-time algorithm that
for an input oracle-given matroid $M$ outputs a principal contraction$^*$-tree of depth at most $k^2$.
\end{theorem}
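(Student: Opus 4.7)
My plan is to prove the three assertions in turn. For the lower bound $\log_2 k \le \csd(M)$, I would argue by induction on $d = \csd(M)$ that every circuit of $M$ has size at most $2^d$. In the base case $d = 0$ the matroid has rank zero, so every circuit is a loop of size one. For the inductive step it suffices to consider a connected matroid, in which case the recursive definition supplies a one-dimensional subspace $\langle v \rangle$ with $\csd(M/\langle v \rangle) \le d - 1$. Given a circuit $C$ of size $k$ in $M$, one distinguishes whether $v \in \lin{C}$. If $v \notin \lin{C}$, then $C$ remains a circuit of $M/\langle v \rangle$ and by induction $k \le 2^{d-1} \le 2^d$. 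If $v \in \lin{C}$, then $C$ has rank $k - 2$ in $M/\langle v \rangle$; a matroid-theoretic argument, based on picking two linearly independent witnesses for the rank deficit of $C$ and minimising their supports, produces two circuits of $M/\langle v \rangle$ whose union covers $C$, each of size at most $2^{d-1}$ by the inductive hypothesis, yielding $k \le 2 \cdot 2^{d-1} = 2^d$. The delicate point here is selecting the two witnesses so that the resulting circuits indeed cover $C$; this requires a careful choice rather than the naive fundamental-circuit construction with respect to an arbitrary basis of $C$.

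For the upper bound $\csd(M) \le k^2$ and the algorithmic claim, I would construct a principal contraction$^*$-decomposition tree by a recursive procedure using only the independence oracle. If $r(M) = 0$, output the single-vertex tree; if $M$ is disconnected, recurse on each component and identify the resulting roots; if $M$ is connected of positive rank, use oracle queries to find any circuit $C$ of size at most $k$ (such a circuit exists because augmenting any basis by an outside element yields a circuit of size at most $k$), and contract the elements of $C$ one by one, forming a ``spine'' of length $|C| - 1 \le k - 1$ above the root of the tree built recursively for the matroid after contraction. The tree so produced is principal by construction, and it only remains to bound its depth.

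The main obstacle is establishing the depth bound of $k^2$. I would prove it via a potential argument: let $\Phi(M')$ denote the maximum circuit size of the connected component of $M'$ currently being processed, and show that $\Phi$ strictly decreases from one spine-extending round to the next. After contracting $|C| - 1$ elements of a circuit $C$ in a connected matroid, the remaining element of $C$ becomes a loop and splits off as a trivial component; in every non-trivial surviving component either every largest circuit met $C$ non-trivially and was strictly shortened by the contraction, or the component is strictly smaller than the original since the removal of $C$ from a connected matroid breaks its connectivity. Hence at most $k$ rounds occur along any root-to-leaf path, each contributing at most $k - 1$ edges to the spine, so the total depth is at most $k \cdot (k - 1) + 1 \le k^2$. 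The polynomial running time then follows because each round performs only polynomially many independence-oracle queries (for circuit finding, connectivity testing, and contraction) and the recursion has polynomially many nodes.
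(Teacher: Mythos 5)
First, a point of reference: the paper does not prove Theorem~\ref{thm:circuit} at all --- it imports the statement from~\cite{KarKLM17} --- so there is no in-paper proof to compare yours against; your proposal has to stand on its own. Your lower bound is essentially sound. The covering lemma you flag as delicate is true and not hard: in $M/\langle v\rangle$ the restriction to $C$ has rank $|C|-2$ and no coloops (because $\lin{C\setminus\{x\}}=\lin{C}$ for every $x\in C$, as $C$ is a circuit of $M$), and any coloop-free matroid of nullity two is the union of two of its circuits --- its dual has rank two and no loops, so the circuits of the restriction are exactly the complements of the at least two parallel classes of that dual, and any two of these complements cover the ground set. You should supply this argument explicitly; ``minimising the supports of two witnesses'' does not by itself produce a covering pair.

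The upper bound, however, has a genuine gap: contracting an arbitrary circuit of size at most $k$ in a connected matroid need neither disconnect it nor reduce the maximum circuit size of any surviving component, so your potential $\Phi$ does not decrease and the claimed bound of $k$ rounds fails. Concretely, let $G_m$ be the graph obtained from a $10$-cycle $d_1d_2\cdots d_{10}$ by attaching $m$ vertex-disjoint triangles $c^i_1c^i_2c^i_3$ together with the edges $c^i_1d_1$ and $c^i_2d_6$. Every cycle of $G_m$ is the $10$-cycle, a triangle, or the union of two internally disjoint $d_1$--$d_6$ paths each of length at most $5$; hence the circumference is $k=10$, and $G_m$ is $2$-connected, so the cycle matroid is connected. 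Each triangle is a fundamental circuit of a suitable basis and so is a legitimate choice for your procedure, yet contracting it leaves (apart from one loop) a connected matroid whose largest circuit is still the untouched $10$-cycle: neither disjunct of your dichotomy holds, since the largest circuit is disjoint from the contracted triangle and connectivity is not broken (in general $M/C$ can be connected, e.g.\ $K_4$ contracted by a triangle). Your algorithm may therefore contract all $m$ triangles in succession, producing a spine of length $2m$, which exceeds $k^2=100$ once $m>50$. The fallback ``each component is strictly smaller'' only bounds the number of rounds by the number of elements, giving depth roughly $nk$ rather than $k^2$. Repairing this requires a genuinely different mechanism, e.g.\ contracting a \emph{largest} circuit (whose correctness rests on the nontrivial fact that two largest circuits of a connected matroid intersect, and which creates its own algorithmic problem, since a largest circuit is not easily found with an independence oracle) or a different choice of the contracted subspaces altogether.
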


We next introduce the parameter of contraction$^*$-deletion-depth,
which we believe to have not been yet studied previously, but
which is particularly relevant in our context.
To avoid unnecessary technical issues, we introduce the parameter for represented matroids only.
The \emph{contraction$^*$-deletion-depth} of a represented matroid $M$, denoted by $\csdd(M)$, is defined recursively as follows:
\begin{itemize}
\item If $M$ has rank zero, then $\csdd(M)=0$;
\item if $M$ has a single non-loop element, then $\csdd(M)=1$.
\item If $M$ is not connected, then $\csdd(M)$ is the maximum contraction$^*$-deletion-depth of a component of $M$.
\item Otherwise, $\csdd(M)$ is 1 plus the smaller among the minimum contraction$^*$-deletion-depth of the matroid $M\setminus e$,
      where the minimum is taken over all elements of $M$, and
      the minimum contraction$^*$-deletion-depth of a matroid obtained from $M$ by factoring along an arbitrary one-dimensional subspace.
\end{itemize}
Observe that $\csdd(M)\le\cdd(M)$ and $\csdd(M)\le\csd(M)$ for every matroid $M$.

Finally, if $A$ is a matrix, the \emph{deletion-depth}, \emph{contraction-depth}, etc. of $A$
is the corresponding parameter of the vector matroid $M(A)$ formed by the columns of $A$, and
we write $\dd(A)$, $\cd(A)$, etc. for the deletion-depth, contraction-depth, etc. of the matrix $A$.
Observe that the deletion-depth, contraction-depth etc. of a matrix $A$ is invariant under elementary row operations as
elementary row operations preserve the matroid $M(A)$.

\section{Structural results}
\label{sec:struct}

In this section, we prove our structural results concerning optimal primal tree-depth and
optimal incidence tree-depth of a matrix.
We start with presenting an algorithm,
which uses a deletion-tree of the matroid associated with a given matrix
to construct a row-equivalent matrix with small primal tree-depth.

\begin{lemma}
\label{lm:tdP}
There exists a polynomial-time algorithm that
for an input matrix $A$ and a deletion-tree of $M(A)$ with height $d$
outputs a matrix $A'$ row-equivalent to $A$ such that $\td_P(A')\le d$.
\end{lemma}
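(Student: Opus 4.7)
The plan is to prove the lemma by induction on the height $d$ of the given deletion-decomposition tree $T$, strengthening the inductive statement so that together with $A'$ one also constructs a rooted forest $F$ on the columns of $A$, of height at most $d$, with the property that the support of every row of $A'$ lies on a single root-to-leaf path of $F$. This implies $\td_P(A')\le d$, since the support of every row is a clique in the primal graph and every clique in $\cl(F)$ is a chain under the ancestor relation of $F$.

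The base case, when $T$ is a single vertex, is trivial. In the disconnected case, where the root of $T$ has subtrees $T_1,\ldots,T_k$ corresponding to components $C_1,\ldots,C_k$ of $M(A)$, the standard matroid fact that no circuit crosses distinct components implies that the column spans of $C_1,\ldots,C_k$ form a direct sum inside $\mathbb{R}^m$. Extending this decomposition to a basis of $\mathbb{R}^m$ supplies the change of basis underlying a sequence of row operations that brings $A$ into block diagonal form with blocks $A_1,\ldots,A_k$ (plus possibly some zero rows). I then recurse on each pair $(A_i,T_i)$ and take the disjoint union of the resulting forests; since every row of the reassembled matrix has its support inside a single block, the invariant is inherited directly.

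In the connected case, the root edge of $T$ is labelled by an element $e$ and $T\setminus e$ (of height $d-1$) is a deletion-decomposition tree of $M(A)\setminus e$. I apply the induction hypothesis to $A\setminus e$ with $T\setminus e$ to obtain a row-equivalent matrix $B$ together with a forest $F'$ of height at most $d-1$ satisfying the invariant, and I lift the recursive row operations to the full matrix to obtain $A'$ with $A'\setminus e=B$; column $e$ is never pivoted on and is merely carried along by the lifted operations. I form $F$ by making $e$ a new root whose children are the roots of the trees of $F'$, producing a single tree of height at most $d$. For any row $q$ of $A'$, its support restricted to the columns of $B$ equals the support of row $q$ in $B$, which lies on a root-to-leaf path of $F'$ by the inductive invariant; if column $e$ does not appear in that support, the same path witnesses the invariant in $F$, and otherwise the support is $\{e\}\cup S$ for some chain $S$ inside a single tree of $F'$, which remains a chain after prepending the new root $e$.

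The point I expect to be the crux, and which distinguishes this simple argument from more laborious attempts one might first consider (based on pivoting on column $e$ before the recursive call and then cleaning up the distortion it introduces), is the observation that no cleanup of column $e$ is required at all: placing $e$ at the top of $F$ automatically ensures that $\cl(F)$ contains every primal-graph edge incident to $e$, no matter how many entries of column $e$ are nonzero in $A'$. Once this is noticed the verification above is routine, and polynomial running time is immediate because the only nontrivial linear-algebraic work per recursive step is the block diagonalization in the disconnected case, which in turn reduces to finding bases of the column spans of the components.
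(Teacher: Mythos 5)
Your proposal is correct and takes essentially the same route as the paper's proof: an induction over the deletion-decomposition tree in which the independence of the spans of the components of $M(A)$ yields a block-diagonalization at disconnected nodes, and a deleted element costs only one level of depth because it can sit at the top of the forest (the paper phrases this as deleting the corresponding vertex from the primal graph decreasing tree-depth by at most one). The only cosmetic difference is that the paper performs a single global diagonalization with respect to the basis formed by the vertex labels of the tree and then verifies the depth bound by induction on the number of columns, whereas you build the matrix and an explicit forest invariant recursively by induction on the height.
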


\begin{proof}
We establish the existence of the algorithm by proving that $\td_P(A')\le d$ in a constructive (algorithmic) way.
Fix a matrix $A$ and a deletion-tree $T$ of $M(A)$ with height $d$.

Let $X$ be the set of non-zero columns that are labels of the vertices of $T$.
We show that the columns contained in $X$ form a basis of the column space of the matrix $A$.
As the matroid obtained from $M(A)$ by deleting the labels of the edges of $T$ has no component of size two or more,
the columns contained in $X$ are linearly independent.
Suppose that there exists a non-zero column $x$ that is not a linear combination of the columns contained in $X$, and
choose among all such columns the label of an edge $e$ as far from the root of $T$ as possible.
Since the element $x$ does not form its own component in the matroid obtained from $M$
by deleting the labels of all edges on the path from the root to $e$ (excluding $e$),
$x$ is a linear combination of the labels of the vertices and edges of the subtree of $T$ delimited by $e$.
This implies that either $x$ is a linear combination of the columns in $X$ or
there is a label of an edge of this subtree that is not a linear combination of the columns in $X$ contrary to the choice of $x$.
We conclude that $X$ is a basis of the column space of $A$.
In particular, unless $A$ is the zero matrix, the set $X$ is non-empty.

\begin{figure}
\begin{center}
\epsfbox{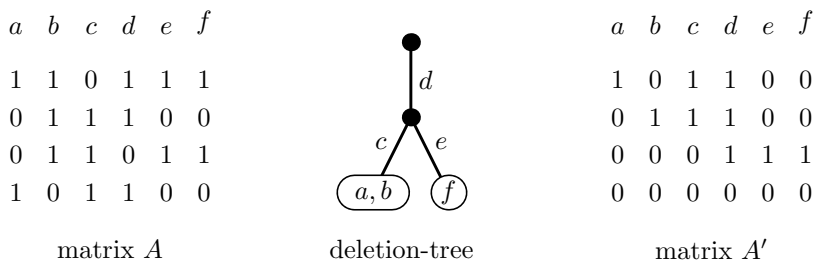}
\end{center}
\caption{A binary matrix $A$, a deletion-tree of the matroid $M(A)$ and the matrix $A'$
         as in the proof of Lemma~\ref{lm:tdP}.
	 Note that $X=\{a,b,f\}$.}
\label{fig:tdP}
\end{figure}

Let $A'$ be the matrix obtained from $A$ by elementary row operations such that
the submatrix of $A'$ induced by the columns of $X$
is the unit matrix and with some additional zero rows;
note that the set $X$ is determined by the input deletion-tree and so the matrix $A'$ depends on the deletion-tree.
See Figure~\ref{fig:tdP} for an example.
This finishes the construction of $A'$ and,
in particular, the description of the algorithm to produce the output matrix $A'$.
To complete the proof, it remains to show that the primal tree-depth of $A'$ is at most $d$,
i.e., the correctness of the algorithm.
This will be proven by induction on the number of columns of an input matrix $A$.

The base of the induction is the case when $A$ has a single column.
In this case, the primal tree-depth of $A'$ is one and
the tree $T$ is a single vertex labeled with the only column of $A$, and so its height is one.
We next present the induction step.
First observe that every label of a vertex of $T$ is either in $X$ or a loop in $M(A)$ (recall that 
only non-zero columns of $A$ are included to $X$), and
every label of an edge $e$ is a linear combination of labels of the vertices in the subtree delimited by $e$.

Suppose that the root of $T$ has a label and let $x$ be one of its labels;
note that $x$ is either a loop or a bridge in the matroid $M(A)$.
Let $B$ be the matrix obtained from $A$ by deleting the column $x$, and
let $T'$ be the deletion-tree of $M(B)$ obtained from $T$ by removing the label $x$ from the root.
Let $B'$ be the matrix produced by the algorithm described above for $B$ and $T'$.
If $x$ is a loop, then the matrix $A'$ is the matrix $B'$ extended by a zero column with possibly permuted rows, and
if $x$ is a bridge (and so $x\in X$),
then the matrix $A'$ is, possibly after permuting rows, the matrix $B'$ extended by a unit vector such that
its non-zero entry is the only non-zero entry in its row.
In either case, the vertex associated with the column $x$ is isolated in the primal graph of $A'$, and
it follows that $\td_P(A')=\td_P(B')\le d$ (the inequality holds by the induction hypothesis).
Hence, we can assume that the root of $T$ has no label.

We next analyze the case that the root of $T$ has a single child and no label.
Let $x$ be the label of the single edge incident with the root of $T$.
Let $B$ be the matrix obtained from $A$ by deleting the column $x$, and
let $T'$ be the deletion-tree of $M(B)$ obtained from $T$ by deleting the edge incident with the root and
rooting the tree at the remaining vertex of the deleted edge.
Let $B'$ be the matrix produced by the algorithm described above for $B$ and $T'$;
note that the primal tree-depth of $B'$ is at most $d-1$ by the induction hypothesis.
Since $B'$ is the submatrix of $A'$ formed by the columns different from $x$ (possibly after permuting rows),
the primal tree-depth of $A'$ is at most $\td_P(B')+1=d$.

The final case to analyze is the case
when the root of $T$ has $k\ge 2$ children (in addition to having no label).
Let $T_1,\ldots,T_k$ be the $k$ subtrees of $T$ delimited by the $k$ edges incident with the root of $T$,
let $Y_1,\ldots,Y_k$ be the labels of the vertices and edges of these subtrees, and
let $B_1,\ldots,B_k$ be the submatrices of $A$ formed by the columns contained in $Y_1,\ldots,Y_k$.
Observe that $T_i$ is a deletion-tree of the matroid $M(B_i)$ for $i=1,\ldots,k$, and
the matrix $B'_i$ produced by the algorithm described above for $B_i$ and $T_i$
is the submatrix of $A'$ formed by the columns contained in $Y_i$ (possibly after permuting rows).
Since the support of the columns contained in $Y_i$ contains only the unit entries of the columns of $A'$ contained in $X\cap Y_i$,
the primal graph of $A'$ contains no edge joining a column of $Y_i$ and a column of $Y_j$ for $i\not=j$.
It follows that the primal tree-depth of $A'$ is at most the maximum primal tree-depth of $B_i$,
which is at most $d$ by the induction hypothesis.
It follows that $\td_P(A')\le d$ as desired.
The proof of the correctness of the algorithm is now completed and so is the proof of the lemma.
\end{proof}

We are now ready to establish the link between the optimal primal tree-depth and the deletion-depth of the matroid associated with the matrix.

\begin{proof}[Proof of Theorem~\ref{thm:tdP}]
Fix a matrix $A$.
By Lemma~\ref{lm:tdP},
it holds that $\td_P^*(A)\le\dd(A)$
as there exists a deletion-tree of the matroid $M(A)$ with height $\dd(A)$.
So, we focus on proving that $\dd(A)\le\td_P^*(A)$.
We will show that every matrix $B$ satisfies that $\dd(B)\le\td_P(B)$;
this implies that $\dd(A)\le\td_P(A')$ for every matrix $A'$ row-equivalent to $A$ as $\dd(A)=\dd(A')$ and
so implies that $\dd(A)\le\td_P^*(A)$.

The proof that $\dd(B)\le\td_P(B)$ proceeds by induction on the number of columns.
If $B$ has a single column, then both $\dd(B)$ and $\td_P(B)$ are equal to one.
We next present the induction step.
We first consider the case when the matroid $M(B)$ is not connected.
Let $B_1,\ldots,B_k$ be the submatrices of $B$ formed by columns corresponding to the components of $M(B)$;
note that some of the submatrices may consist of a single zero column (if $M(B)$ has a loop).
The definition of the deletion-depth implies that $\dd(B)$ is the maximum among $\dd(B_1),\ldots,\dd(B_k)$.
On the other hand, the primal tree-depth of each of the matrices $B_i$
is at most the primal tree-depth of the matrix $B$ as
the primal graph of $B_i$ is a subgraph of the primal graph of $B$.
It follows that $\dd(B_i)\le\td_P(B)$, which implies that $\dd(B)\le\td_P(B)$.

We next assume that the matroid $M(B)$ is connected and
claim that the primal graph of $B$ must also be connected.
Suppose that the primal graph of $B$ is not connected,
i.e., there exists a partition of rows of $B$ into $R_1$ and $R_2$ and
a partition of the columns into $C_1$ and $C_2$, such that for each $i=1,2$,
the support of each column in $C_i$ is contained in $R_i$.
Therefore,
for any dependent set of columns of $B$,
either its subset formed by columns contained in $C_1$ is dependent, or
its subset formed by by columns contained in $C_2$ is dependent, or
both these subsets are dependent.
It follows that the support of every circuit of $M(B)$ is fully contained in either $C_1$ or $C_2$;
in particular, no there is no circuit of $M(B)$ containing a column from $C_1$ and a column from $C_2$.
This implies that $M(B)$ is not connected.
Hence, the primal graph of $B$ must be connected.

Since the primal graph of $B$ is connected,
there exists a column such that the matrix $B'$ obtained by deleting this column satisfies that $\td_P(B')=\td_P(B)-1$.
The induction assumption yields that $\dd(B')\le\td_P(B)-1$ and
the definition of the deletion-depth yields that
the deletion-depth of $M(B)$ is at most the deletion-depth of $M(B')$ increased by one.
This implies that $\dd(B)=\dd(M(B))\le\td_P(B)$ as desired.
\end{proof}

Before proceeding with our structural result concerning incidence tree-depth,
we use the structural results presented in Lemma~\ref{lm:tdP} and Theorem~\ref{thm:tdP}
to get a parameterized algorithm for computing an optimal primal tree-depth of a matrix over a finite field.

\begin{corollary}
\label{cor:tdP}
There exists a fixed parameter algorithm for the parameterization by a finite field $\FF$ and an integer $d$ that
for an input matrix $A$ over the field $\FF$,
\begin{itemize}
\item either outputs that $\td_P^*(A)>d$, or
\item computes a matrix $A'$ row-equivalent to $A$ with $\td_P(A')\le d$ and
      also outputs the associated deletion-tree of $M(A)$ with height $\td_P(A')$.
\end{itemize}
\end{corollary}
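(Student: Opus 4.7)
The plan is to reduce the corollary to an algorithm that computes a deletion-decomposition tree of $M(A)$ of height at most $d$, if one exists. By Theorem~\ref{thm:tdP}, the condition $\td_P^*(A)\le d$ is equivalent to $\dd(M(A))\le d$, so deciding the first reduces to deciding the second. In the positive case, given a deletion-decomposition tree of $M(A)$ of height at most $d$, Lemma~\ref{lm:tdP} produces in polynomial time an equivalent matrix $A'$ with $\td_P(A')\le d$, and the same tree can be returned alongside $A'$. Thus it suffices to describe an FPT algorithm, parameterized by $|\FF|$ and $d$, that on input $A$ either returns a deletion-decomposition tree of $M(A)$ of height at most $d$ or reports that none exists.

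The core subroutine is recursive on the represented matroid, with a budget parameter $d'\le d$. It first computes the connected components of the current matroid (which is polynomial time given a representation), recurses on each with budget $d'$, and attaches the resulting subtrees under a common root. If the current matroid is connected and has at least one non-loop element, the algorithm searches for an element $e$ such that the budget-$(d'-1)$ recursive call on $M\setminus e$ succeeds, and prepends the edge labelled $e$ above the returned tree; if no such $e$ exists, it reports failure. The depth of this recursion is at most $d$, and each individual step (component decomposition, restriction to a sub-representation) runs in polynomial time.

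To obtain the FPT running time $f(d,|\FF|)\cdot n^{O(1)}$, the branching over the choice of $e$ must be pruned. Columns in the same parallel class of the current matroid play identical roles, so it suffices to try a single representative per parallel class, reducing the number of distinct deletions at each step from the number of columns to the number of projective equivalence classes. Grouping the columns by their projective equivalence class and recording only the multiplicity of each class then allows a dynamic-programming-style search in which the state is determined by a bounded combinatorial signature rather than by individual columns. Combined with the depth-$d$ recursion, this yields the desired fixed-parameter bound. Once the tree is obtained, Lemma~\ref{lm:tdP} converts it into $A'$ in polynomial time, finishing the algorithm.

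The main obstacle is precisely establishing the FPT dependence of this search. Unlike contraction-depth, connected matroids with bounded deletion-depth may have unbounded rank, as witnessed by the uniform matroids $U_{r,r+1}$ (which have $\dd=2$ for every $r$ and are representable over every $\FF$), so one cannot simply bound $|M(A)|$ or $r(M(A))$ by a function of the parameters and brute-force over elements. The heart of the argument is therefore to show that the seemingly unbounded branching at disconnected nodes of the decomposition tree, and the unbounded choice of deletion element, collapse into only $f(d,|\FF|)$ genuinely distinct recursive sub-problems once one works modulo parallel equivalence and records multiplicities only up to the remaining budget.
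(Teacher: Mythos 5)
Your reduction of the corollary to constructing a deletion-decomposition tree, via Theorem~\ref{thm:tdP} and Lemma~\ref{lm:tdP}, matches the paper, and your recursive skeleton (split into components, otherwise branch on a deleted element) is the right shape. The gap is exactly where you place it: you never establish the FPT bound on the search, and the pruning you propose does not deliver it. The number of projective (parallel) equivalence classes of columns of a connected matroid of deletion-depth $d$ over $\FF$ is not bounded by any function of $d$ and $|\FF|$ --- your own example $U_{r,r+1}$ has $r+1$ singleton parallel classes and deletion-depth $2$ --- so ``one representative per parallel class'' still leaves an unbounded branching factor, and the ``bounded combinatorial signature'' for the dynamic-programming search is asserted but never exhibited. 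In the negative case in particular, certifying that no deletion works requires exploring all branches, and nothing in your argument prevents the recursion tree from having $n^{\Theta(d)}$ nodes.

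The paper closes this gap by a different mechanism: it expresses ``the deletion-depth of the restriction to $X$ is at most $d$'' as a monadic second-order formula $\psi_d(X)$ (built from formulas for ``two elements lie on a common circuit'' and ``$X$ is a component'') and invokes Hlin\v en\'y's theorem that MSOL properties of matroids represented over a finite field are FPT-testable when additionally parameterized by branch-width; since branch-width is at most deletion-depth, this yields an FPT \emph{decision} procedure for $\dd(M(A))\le d$. With that decision procedure in hand, the construction is a plain polynomial self-reduction: at a connected node one loops over all $n$ elements $e$, tests $\dd(M\setminus e)\le d-1$ with the oracle, and commits to the first $e$ that succeeds, so the total number of oracle calls is $O(nd)$ rather than a branching tree. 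To salvage your direct combinatorial approach you would need a genuinely new argument bounding the number of essentially distinct deletions in the connected case (or some a priori kernelization), which is precisely what the MSOL route lets the paper avoid.
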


\begin{proof}
We first show that
the property that a matroid $M$ has deletion depth at most $d$ can be expressed in monadic second order logic.
Recall that monadic second order formulas for matroids
may contain quantification over elements and subsets of elements of a matroid, and
the predicate $\psi_I(\cdot)$ used to test whether a particular subset is independent
in addition to logic connectives and the equality $=$, the set inclusion $\in$ and the subset inclusion $\subseteq$.
In the formulas that we present,
small letters are used to denote elements of a matroid and capital letters subsets of the elements.
We next present a monadic second order formula $\psi_d(X)$ that
describes whether the deletion-depth of the restriction of the matroid $M$ to a subset $X$ of the elements of $M$ is at most $d$,
which would imply the statement.
The following formula $\psi_c(\cdot,\cdot)$ describes the existence of a circuit containing two distinct elements:
\[\psi_c(x,y)\equiv \left(x\not=y\right)\land\left(\exists X: x\in X\land y\in X\land\neg\psi_I(X)\land\forall z\in X:\psi_I(X\setminus z)\,\right).\]
The next formula $\psi_C(\cdot)$ describes whether a subset $X$ is a component of a matroid (recall that
the binary relation of two matroid elements being contained in a common circuit is transitive):
\[\psi_C(X)\equiv \left(\forall x,y\in X:x\not=y\Rightarrow\psi_c(x,y)\,\right)\land
                  \left(\forall x\in X, y\not\in X:\neg\psi_c(x,y)\,\right).\]
The sought formula $\psi_d(\cdot)$ is defined inductively as follows (we remark that $\psi_d(\emptyset)$ is true for all $d$):
\begin{align*}
\psi_1(X) & \equiv \forall x,y\in X: x\not=y\Rightarrow\neg\psi_c(x,y) &&\mbox{and}\\
\psi_d(X) & \equiv \forall X'\subseteq X: \psi_C(X')\Rightarrow\exists x\in X':\psi_{d-1}(X'\setminus\{x\}) &&\mbox{for $d\ge 2$.}
\end{align*}

Hlin\v en\'y~\cite{Hli03a,Hli06} proved that all monadic second order logic properties
can be tested in a fixed parameter way for matroids represented over a finite field $\FF$ with branch-width at most $d$
when parameterized by the property, the field $\FF$ and the branch-width $d$.
Since the branch-width of a matroid $M$ is at most its deletion-depth,
it follows that testing whether the deletion-depth of an input matroid represented over a finite field $\FF$ is at most $d$
is fixed parameter tractable when parameterized by the field $\FF$ and the integer $d$.
This establishes the existence of a fixed parameter algorithm deciding
whether $\td_P^*(A)=\dd(M(A))\le d$ (the equality follows from Theorem~\ref{thm:tdP}).
To obtain the algorithm claimed to exist in the statement of the corollary,
we need to extend the algorithm for testing
whether the deletion-depth of an input matroid $M$ represented over $\FF$ is at most $d$ to
an algorithm that also outputs a deletion-tree of $M$ with height at most $d$;
this would yield an algorithm for computing $A'$ by Lemma~\ref{lm:tdP}.

We now describe the extension of the algorithm from testing to constructing a deletion-decompo\-si\-tion tree as a recursive algorithm.
Let $d$ be the computed deletion-depth of an input matroid $M$.
The deletion-depth of an input matroid $M$ is one if and only if every element of $M$ is a loop or a bridge.
Since the latter is easy to algorithmically test,
if $d=1$, then the deletion-tree of height one consists of a single vertex labeled with all elements of $M$.
If $d\ge 2$ and the matroid $M$ is not connected,
we first identify its components, which can be done in polynomial time even in the oracle model,
then proceed recursively with each component of $M$ and
eventually merge the roots of all deletion-trees obtained recursively.

Finally, we discuss the case when $d\ge 2$ and the matroid $M$ is connected.
We loop over all elements $e$ of $M$ and test using the monadic second order checking algorithm of Hlin\v en\'y~\cite{Hli03a,Hli06}
whether $\dd(M\setminus e)\le d-1$, i.e., whether $\psi_{d-1}(M\setminus e)$ is true.
Such an element $e$ must exist (otherwise, the deletion-depth of $M$ cannot be $d$) and when $e$ is found,
we recursively apply the algorithm to $M\setminus e$
to obtain a deletion-tree $T$ of $M\setminus e$ with height $d-1$.
Note that there is a single recursive call as we invoke recursion for a single element $e$ of the matroid $M$.
The tree $T$ returned by the recursive call
is extended to a deletion-tree of $M$ by introducing a new vertex,
joining it by an edge to the root of $T$,
rerooting the tree to the new vertex, and
labeling the new edge with the element $e$.
This completes the description of the algorithm for constructing a deletion-tree of height at most $d$ if it exists.
Observe that the running time of the described procedure for constructing a deletion-tree
is bounded by the product of the number of elements of $M$ and
the running time of the test whether an input matroid is connected and
the test whether an input matroid satisfies $\psi_1(\cdot),\ldots,\psi_d(\cdot)$;
in particular, it is fixed parameter when parameterized by a finite field $\FF$ and an integer $d$.
\end{proof}

We conclude this section by establishing a link between the optimal incidence tree-depth and
the contraction$^*$-deletion-depth of the matroid associated with the matrix.

\begin{proof}[Proof of Theorem~\ref{thm:tdI}]
We prove the equality as two inequalities starting with the inequality $\csdd(A)\le\td_I^*(A)-1$.
To prove this inequality,
we show that $\csdd(A)\le\td_I(A)-1$ holds for every matrix $A$ with $m$ rows and $n$ columns by induction on $m+n$.
The base of the induction is formed by the cases when all entries of $A$ are zero, $n=1$ or $m=1$.
If all entries of $A$ are zero, then $\csdd(A)=0$ and $\td_I(A)=1$.
If $n=1$ and $A$ is non-zero,
then $M(A)$ has a single non-loop element and so $\csdd(A)=1$
while the incidence graph of $A$ is formed by a star and possibly some isolated vertices and so $\td_I(A)=2$.
Finally,
if $m=1$ and $A$ is non-zero,
then $M(A)$ has rank 1, so contracting any non-loop element of $M(A)$ yields a
matroid of rank zero and so $\csdd(A)=1$.
On the other hand,
the incidence graph of $A$ is formed by a star and possibly some isolated vertices and so $\td_I(A)=2$.

We now establish the induction step,
i.e., we assume that the matrix $A$ is non-zero, $m\ge 2$ and $n\ge 2$.
First suppose that the matroid $M(A)$ is not connected.
Let $X_1,\ldots,X_k$ be the components of $M(A)$ and
let $A_1,\ldots,A_k$ be the submatrices of $A$ formed by the columns $X_1,\ldots,X_k$, respectively.
The definition of the contraction$^*$-deletion-depth implies that
$\csdd(A)$ is the maximum of $\csdd(A_i)$.
The induction hypothesis yields that $\csdd(A_i)\le\td_I(A_i)-1$.
Since the incidence graph of $A_i$ is a subgraph of the incidence graph of $A$,
it follows that $\td_I(A_i)\le\td_I(A)$ and so $\csdd(A_i)\le\td_I(A)-1$.
We conclude that $\csdd(A)\le\td_I(A)-1$.

Next suppose that the matroid $M(A)$ is connected
but the incidence graph of $A$ is not connected.
As the columns associated with vertices contained in different components of the incidence graph of $A$ have disjoint supports,
such columns cannot be contained in the same component of the matroid $M(A)$.
Hence, if the incidence graph of $A$ is not connected despite the matroid $M(A)$ being connected,
then the incidence graph of $A$ consists of a single non-trivial component and
isolated vertices associated with zero rows of $A$.
Let $x$ be one such row and
let $A'$ be the matrix obtained from $A$ by deleting the row $x$.
Since the matroids $M(A)$ and $M(A')$ are the same,
it holds that $\csdd(A)=\csdd(A')$, and
since the incidence graph of $A$ is the incidence graph of $A'$ with an isolated vertex added,
it holds $\td_I(A)=\td_I(A')$.
Hence, the induction hypothesis yields that $\csdd(A')\le\td_I(A')-1$,
which implies that $\csdd(A)\le\td_I(A)-1$.

Finally, suppose that the matroid $M(A)$ is connected and
the incidence graph of $A$ is also connected.
The definition of the tree-depth implies that
there exists a vertex $w$ of the incidence graph whose deletion decreases the tree-depth of the incidence graph by one.
Let $A'$ be the matrix obtained from $A$ by deleting the row or the column associated with the vertex $w$ and
note that $\td_I(A')=\td_I(A)-1$.
If the vertex $w$ is associated with a column $x$,
the matroid $M(A')$ is the matroid obtained from $M(A)$ by deleting the element $x$.
If the vertex $w$ is associated with a row $x$,
the matroid $M(A')$ is the matroid obtained from $M(A)$ by contracting
by the subspace generated by the unit vector with the entry in the row $x$.
In either case,
the definition of the contraction$^*$-deletion-depth implies that $\csdd(A)\le\csdd(A')+1$.
The induction hypothesis applied to $A'$ yields that $\csdd(A')\le\td_I(A')-1$,
which yields that $\csdd(A)\le\td_I(A')=\td_I(A)-1$.

To complete the proof of the theorem,
it remains to show that
the inequality $\td_I^*(A)\le\csdd(A)+1$ holds for every matrix $A$.
The proof proceeds by induction on the number $n$ of columns of $A$.
If $n=1$ and the only column of $A$ is zero,
then the incidence graph of $A$ is formed by isolated vertices and so $\td_I(A)=1$
while $\csdd(A)=0$ since the rank of $M(A)$ is zero.
If $n=1$ and the only column of $A$ is not zero,
then the incidence graph of $A$ is formed by a star and possibly some isolated vertices and so $\td_I(A)=2$
while $\csdd(A)=1$.
In either case, it holds that $\td_I^*(A)\le\td_I(A)=\csdd(A)+1$.

We now establish the induction step.
First suppose that the matroid $M(A)$ is not connected.
Let $X_1,\ldots,X_k$ be the sets of columns forming the components of $M(A)$ and
let $A'$ be the matrix row-equivalent to $A$ such that
there exist sets of rows $Y_1,\ldots,Y_k$ such that
$|Y_i|=r_{M(A)}(X_i)$ for $i=1,\ldots,k$ and
the only columns with non-zero entries in the rows $Y_i$ are those of $X_i$ and
all rows not contained in $Y_1\cup\cdots\cup Y_k$ are zero (such a matrix $A'$ exists
since the matroid $M(A)$ is union of its restrictions to $X_1,\ldots,X_k$).
Let $A'_i$ be the submatrix of $A'$ formed by the rows of $Y_i$ and the columns of $X_i$.
Observe that all entries of the matrix $A$ not contained in one of the matrices $A'_1,\ldots,A'_k$ are zero.
By the induction hypothesis, for every $i=1,\ldots,k$,
there exists a matrix $A''_i$ row-equivalent to $A'_i$ such that $\td_I(A''_i)\le\csdd(M(A)\left[X_i\right])+1$,
in particular, $\td_I(A''_i)\le\csdd(A)+1$.
Let $A''$ be the matrix obtained from $A'$ by replacing $A'_i$ with $A''_i$ for $i=1,\ldots,k$.
Observe that $A''$ is row-equivalent to $A'$ and so to $A$.
Since the incidence graph of $A''$
is the union of the incidence graphs of $A''_i$, $i=1,\ldots,k$, and possibly some isolated vertices (which correspond to zero rows),
it follows that $\td_I(A'')\le\csdd(A)+1$.
Hence, it holds that $\td_I^*(A)\le\csdd(A)+1$.

To complete the proof,
we need to consider the case that the matroid $M(A)$ is connected.
The definition of the contraction$^*$-deletion-depth implies that
there exists an element $x$ of $M(A)$ such that $\csdd(M(A)\setminus x)=\csdd(M(A))-1=\csdd(A)-1$ or
there exists a one-dimensional subspace such that the contraction by this subspace
yields a matroid $M'$ such that $\csdd(M')=\csdd(M(A))-1=\csdd(A)-1$.
In the former case, let $A'$ be the matrix obtained from $A$ by deleting the column $x$.
By the induction hypothesis,
there exists a matrix $A''$ row-equivalent to $A'$ such that $\td_I(A'')\le\csdd(A')+1=\csdd(A)$, and
let $A'''$ be the matrix obtained from $A$ by the same elementary row operations that $A''$ is obtained from $A'$.
Observe that the incidence graph of $A''$
can be obtained from the incidence graph of $A'''$ by deleting the vertex associated with the column $x$.
Hence, $\td_I(A''')\le\td_I(A'')+1$.
Since $A'''$ is row-equivalent to $A$, it follows that
\[\td_I^*(A)\le\td_I(A''')\le\td_I(A'')+1\le\csdd(A)+1.\]

We now analyze the latter case, i.e., the case that
there exists a one-dimensional subspace such that the contraction by this subspace
yields a matroid $M'$ with $\csdd(M')=\csdd(A)-1$.
Let $A'$ be the matrix obtained from $A$ by elementary row operations such that the contracted subspace used to obtain $M'$
is generated by the unit vector with the non-zero entry being its first entry, and
let $B$ be the matrix obtained from $A'$ by deleting the first row.
By the induction hypothesis, there exists a matrix $B'$ row-equivalent to $B$ such that $\td_I(B')\le\csdd(A')+1=\csdd(A)$, and
let $A''$ be the matrix consisting of the first row of $A$ and the matrix $B'$.
Observe that $A''$ is row-equivalent to $A$.
Since the incidence graph of $B'$
can be obtained from the incidence graph of $A''$ by deleting the vertex associated with the first row,
it holds that $\td_I(A'')\le\td_I(B')+1$.
Hence, it follows that
\[\td_I^*(A)\le\td_I(A'')\le\td_I(B')+1\le\csdd(A)+1.\]
The proof of the theorem is now completed.
\end{proof}

\section{Primal tree-depth}
\label{sec:primal}

In this section, we present a parameterized algorithm for computing a row-equivalent matrix
with small primal tree-depth and bounded entry complexity if such a matrix exists.

\begin{proof}[Proof of Theorem~\ref{thm:alg-tdP}]
We first find a bound on $\dot{\kappa}_B$ when $B$ is an integer matrix with $\td_P(B)\le d$ and $\ec(B)\le e$ (recall
that $\dot{\kappa}_B$ is the least common multiple of the entries of the circuits of $B$).
Consider such a square invertible matrix $C$ with $\td_P(C)\le d$ and $\ec(C)\le e$.
By the result of Brand, Ordyniak and the second author~\cite{BraKO21},
the maximum denominator appearing over all entries of $C^{-1}$ can be bounded by a function of $d$ and $e$;
in particular, there exists $k_0\le (2^e)^{d!}{(d!)}^{d!/2}$ such that
every entry of $C^{-1}$ is $1/k$-integral for some $k\le k_0$.
Set $\kappa_0$ to be the least common multiple of the integers $1, \ldots, k_0$.
By \cite[Theorem 3.8]{EkbNV21},
$\dot{\kappa}_B$ is the smallest integer such that
every denominator of the inverse of every invertible square submatrix of $B$ divides $\dot{\kappa}_B$.
Since the primal tree-depth of any square submatrix of $B$ at most the primal tree-depth of $B$,
$\dot{\kappa}_B$ divides $\kappa_0$ for every matrix $B$ with $\td_P(B)\le d$ and $\ec(B)\le e$.
In particular, $\dot{\kappa}_B\le\kappa_0$ for every matrix $B$ with $\td_P(B)\le d$ and $\ec(B)\le e$.

We next describe the algorithm from the statement of the theorem.
Without loss of generality, we can assume that
the rank of the input matrix $A$ is equal to the number of its rows, in particular, $A$ is non-zero.
The algorithm starts with diagonalizing the square submatrix of the input matrix $A$
formed by an arbitrary basis of the column space,
i.e., performing elementary row operations so that the selected columns form the identity matrix.
The resulting matrix is denoted by $A_I$.
If the numerator or the denominator of any (non-zero) entry of $A_I$ does not divide $\kappa_0$,
the algorithm arrives at the first conclusion of the theorem as
every (non-zero) entry of $A_I$ is a fraction that can be obtained by dividing two entries of a circuit of $A$ (indeed,
consider the circuit of the matrix $A$ with support formed by a column $x$ and some of the columns of the chosen basis, and
observe that each entry in the column $x$
is equal to the $x$-entry of the circuit divided by one of its other entries).
Hence, we assume that both numerator and denominator of each (non-zero) entry of $A_I$ divides $\kappa_0$ in the rest.
The algorithm multiplies $A_I$ by $\kappa_0$,
which yields an integral matrix $A_0$ with entries between $-\kappa_0^2$ and $\kappa_0^2$.

Let $M_{\QQ}$ be the column matroid of $A_0$ when viewed as a matrix over rationals and
let $M_p$ be the column matroid of $A_0$ when viewed as a matrix over a $p$-element field $\FF_p$ for a prime $p>\kappa_0^2$;
note that such a prime $p$ can be found algorithmically as
the algorithm is parameterized by $d$ and $e$ and $\kappa_0$ depends on $d$ and $e$ only.
Note that the elements of both matroids $M_{\QQ}$ and $M_p$ are the columns of the matrix $A_0$,
i.e., we can assume that their ground sets are the same, and
the matroid $M_{\QQ}$ is the column matroid of $A$, which is a matrix over rationals.

We now establish the following claim:
\emph{if $A$ is row-equivalent to a matrix with primal tree-depth at most $d$ and entry complexity at most $e$,
then the matroids $M_{\QQ}$ and $M_p$ are the same}.
If a set $X$ of columns forms a circuit in $M_{\QQ}$,
then there exists a linear combination of the columns of $X$ that
has all coefficients integral and coprime,
i.e., not all are divisible by $p$, and that is equal to the zero vector (in fact, there exist such
coefficients that they all divide $\kappa_0$ by the definition of $\kappa_0$);
it follows that the set $X$ is also dependent in $M_p$.
If a set $X$ of columns is independent in $M_{\QQ}$,
let $B_I$ be an invertible square submatrix of $A_I$ formed by the columns $X$ and $|X|$ rows, and
let $B_0$ be the square submatrix of $A_0$ formed by the same rows and columns.
By~\cite[Lemma~3.3]{EkbNV21},
the matrix $B_I^{-1}$ is $1/\dot{\kappa}_A$-integral and
the absolute value of both numerator and denominator of each entry of $B_I^{-1}$ is at most $\dot{\kappa}_A$.
Note that $\dot{\kappa}_A$ divides $\kappa_0$ (here,
we use the definition of $\kappa_0$ and the assumption that
$A$ is row-equivalent to a matrix with primal tree-depth at most $d$ and entry complexity at most $e$) and
so the matrix $B_I^{-1}$ is $1/\kappa_0$-integral and
the absolute value of both numerator and denominator of each entry of $B_I^{-1}$ is at most $\kappa_0$.
Let $B'$ be the matrix obtained from $B_I^{-1}$ by multiplying each entry by $\kappa_0$;
note that $B'$ is an integer matrix with entries between $-\kappa_0^2$ and $\kappa_0^2$.
The definitions of the matrices $B_I$, $B_0$ and $B'$ yield that
the product of the matrix $B'$ when viewed as over $\FF_p$ and the matrix $B_0$
has numerically the same entries as the matrix $(\kappa_0 B_I^{-1})(\kappa_0 B_I)$,
which is a diagonal matrix with all diagonal entries equal to $\kappa_0^2$.
Hence, the matrix $B_0$ is full rank (over the field $\FF_p$) and
so the set $X$ is independent in $M_p$.

We now continue the description of the algorithm that is asserted to exist in the statement of the theorem.
As the next step, we apply the algorithm from Corollary~\ref{cor:tdP} to the matrix $A_0$ viewed as over the field $\FF_p$.
If the algorithm determines that the deletion-depth of $A_0$ is larger than $d$,
we arrive at the first conclusion of the theorem:
either the matroids $M_{\QQ}$ and $M_p$ are different (and so $A$ is not row-equivalent to a matrix with primal tree-depth at most $d$ and entry complexity at most $e$), or
the matroids $M_{\QQ}$ and $M_p$ are the same but $\dd(A)=\td_P^*(A)>d$.
If the algorithm determines that the deletion-depth of $A_0$ is at most $d$,
it also outputs a deletion-tree of $M_p$ with height at most $d$.
If the output deletion-tree is not valid for the matroid $M_{\QQ}$,
which can be verified in polynomial time,
the matroids $M_{\QQ}$ and $M_p$ are different and we again arrive at the first conclusion of the theorem.
If the output deletion-tree is also a deletion-tree of the matroid $M_{\QQ}$,
we use the algorithm from Lemma~\ref{lm:tdP}
to obtain a matrix $A'$ row-equivalent to $A$ such that
the primal tree-depth of $A'$ at most the height of the deletion-tree, i.e., $\td_P(A')\le d$.
As the matrix $A'$ contains a unit submatrix formed by $m$ rows and $m$ columns,
each (non-zero) entry of $A'$ is a fraction that can be obtained by dividing two entries of a circuit of $A$ (as argued earlier).
If the absolute value of the numerator or the denominator of any of these fractions exceeds $\kappa_0$,
then $c_\infty(A)>\kappa_0$ and we again arrive at the first conclusion of the theorem.
Otherwise, we output the matrix $A'$.
Note that the primal tree-depth of $A'$ is at most $d$ and
its entry complexity is at most $2\left\lceil\log_2\left(\kappa_0+1\right)\right\rceil$.
As $\kappa_0$ depends on $d$ and $e$ only,
the matrix $A'$ has the properties given in the second conclusion of the theorem.
\end{proof}

\section{Dual tree-depth, circuit complexity and Graver basis}
\label{sec:complex}

In this section, we link minimum dual tree-depth of a matrix to its circuit complexity, and
also present related algorithmic results.
We start with proving Theorem~\ref{thm:equiv2};
in fact, we show that a matrix row-equivalent to an input matrix $A$ such that
both its dual tree-depth and entry complexity bounded by a function of $c_1(A)$
can be found efficiently.
Note that the algorithm presented in the next theorem is not a fixed parameter algorithm,
i.e., its running time is polynomial in the size of an input matrix.

\begin{theorem}
\label{thm:alg-tdD-c1}
There exists a polynomial-time algorithm that for a given rational matrix $A$ with $\dim\ker A>0$
outputs a row-equivalent matrix $A'$ such that $\td_D(A')\le c_1(A)^2$ and $\ec(A')\le 2\left\lceil\log_2 (c_1(A)+1)\right\rceil$.
\end{theorem}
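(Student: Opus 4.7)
The plan is to obtain a principal contraction$^*$-decomposition tree $T$ of $M(A)$ via Theorem~\ref{thm:circuit}, and then to construct $A'$ by Gaussian-eliminating $A$ against the columns labelling the edges of $T$. The largest circuit of $M(A)$ has size at most $c_1(A)$, since the support of any circuit vector of $A$ has size at most its $\ell_1$-norm. Theorem~\ref{thm:circuit} therefore produces in polynomial time a principal contraction$^*$-decomposition tree $T$ of $M(A)$ of depth at most $c_1(A)^2$. Let $r = r(M(A))$; the tree $T$ has exactly $r$ edges, whose labels are distinct columns $v_1, \ldots, v_r$ of $A$, and these columns form a basis of the column space of $A$, because each contraction along $T$ quotients out a one-dimensional subspace generated by a column, contributing a new dimension to the span. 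Let $A'$ be the matrix obtained from $A$ by row operations that turn the submatrix on the columns $v_1, \ldots, v_r$ into the $r \times r$ identity (with $m - r$ additional zero rows); this is standard Gaussian elimination and runs in polynomial time.

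For the entry complexity bound, fix any non-pivot column $c$. There are unique scalars $\alpha_1, \ldots, \alpha_r$ with $c = \sum_i \alpha_i v_i$, and these are precisely the entries of column $c$ in $A'$. Since $\{v_1, \ldots, v_r\}$ is a basis, the set $\{c\} \cup \{v_i : \alpha_i \ne 0\}$ is minimally dependent in $M(A)$, so it is a circuit of $A$ carrying a coprime integer relation $u_c \, c + \sum_i u_i v_i = 0$. Hence $\alpha_i = -u_i / u_c$ for every $i$ in the support, and since $|u_c|, |u_i| \le c_1(A)$, we conclude $\ec(A') \le 2\lceil\log_2(c_1(A)+1)\rceil$ (the pivot entries $0$ and $1$ trivially satisfy the same bound).

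To bound $\td_D(A')$, define a forest $F$ on the rows of $A'$ by identifying each nonzero row with the edge of $T$ whose label is its pivot column, inheriting the ancestor-descendant relation from $T$, and attaching each zero row as an isolated component. The height of $F$ is at most the depth of $T$, i.e.\ at most $c_1(A)^2$. It remains to check that for every column $c$ the rows with nonzero entries in column $c$ of $A'$ form a chain in $F$; pivot columns have a single nonzero entry and so pose no issue. For a non-pivot column $c$, tracing the recursive principal construction of $T$ shows that $c$ follows a unique root-path: at each node corresponding to a disconnected matroid, $c$ lies in exactly one component; at each node corresponding to a connected matroid, the recursion proceeds through the unique outgoing edge. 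The path terminates at the first edge whose contraction sends $c$ to a loop, and the support $\{v_i : \alpha_i \ne 0\}$ is exactly the set of labels along this prefix, hence a chain in $F$.

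The main obstacle is the last paragraph: one needs to argue carefully that principality of $T$ forces the pivot columns used in the expansion of each non-pivot column to lie along a single root-to-node path of $T$, which is what marries the circuit-based description of the entries to the ancestor-chain structure of the forest. Once this is in place, the entry bound is an immediate consequence of the coprime integer form of circuits, and the polynomial runtime is inherited from Theorem~\ref{thm:circuit} and Gaussian elimination.
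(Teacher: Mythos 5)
Your proposal is correct and follows essentially the same route as the paper: obtain a principal contraction$^*$-decomposition tree from Theorem~\ref{thm:circuit} (depth at most $c_1(A)^2$ since circuits of $M(A)$ have size at most $c_1(A)$), pivot on its edge labels, read off the entry bound from the coprime integer circuit relating each non-pivot column to the basis, and bound $\td_D(A')$ by mapping rows to tree edges and showing each column's support lies on a root-to-leaf path. The "main obstacle" you flag is handled in the paper by the same observation you sketch, namely that each element becomes a loop only after the contractions along its unique root-to-leaf path, so by uniqueness of basis expansion its support is a chain in the induced forest.
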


\begin{proof}
We start with the description of the algorithm from the statement of the theorem.
Let $A$ be the input matrix.
We apply the algorithm from Theorem~\ref{thm:circuit} to the matroid $M(A)$ given by the columns of the matrix $A$.
Let $T$ be the principal contraction$^*$-tree output by the algorithm and
let $X$ be the set of columns of $A$ that are the labels of the edges of $T$,
i.e., the elements of $M(A)$ used in the contractions.
Observe that the definition of the contraction$^*$-depth and the principal contraction$^*$-tree yields that
the labels of the edges of $T$ form a basis $X$ of $M(A)$ and
for every element $z$ of $M(A)$ there is a leaf $v$ of $T$ such that
$z$ is contained in the linear hull of the labels of the edges on the path from $v$ to the root of $T$.
Next perform row-operations on the matrix $A$ in a way that
the submatrix formed by the columns of $X$ is an identity matrix (with additional zero rows
if the rank of $A$ is smaller than the number of its rows);
let $A'$ be the resulting matrix;
see Figure~\ref{fig:tdD} for an example.
The algorithm outputs the matrix $A'$.
Note that the running time of the algorithm is indeed polynomial in the size of the input matrix $A$.

\begin{figure}
\begin{center}
\epsfbox{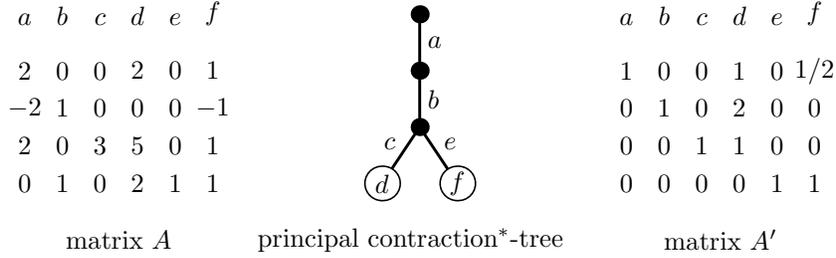}
\end{center}
\caption{A rational matrix $A$, a principal contraction$^*$-tree $T$ of the matroid $M(A)$ and the matrix $A'$
         as in the proof of Theorem~\ref{thm:alg-tdD-c1}.}
\label{fig:tdD}
\end{figure}

We next analyze the matrix $A'$ that is output by the algorithm.
Since $\dim\ker A>0$,
the matrix $A$ has at least one circuit.
Recall that for every circuit $C$ of the matroid $M(A)$,
there exists a circuit of $A$ whose support is exactly formed by the elements of $C$.
This implies that every circuit of $M(A)$ contains at most $c_1(A)$ elements and so
Theorem \ref{thm:circuit} implies that the depth of the principal contraction$^*$-tree $T$ is at most $c_1(A)^2$.

Let $F$ be a rooted forest obtained from the tree $T$ as follows.
For each edge $e$, the vertex of $e$ farther from the root is identified with the (unique) row of $A'$ that
is non-zero in the column that is the label of the edge $e$ (recall that the columns of $X$ form an identity matrix), and
then remove the root of $T$;
also add an isolated vertex for each zero row of $A'$.
In this way, we obtain a rooted forest $F$ with vertex set formed by the rows of $A'$.
Note that the height of $F$ is at most $c_1(A)^2$.
We will show that the dual graph of $A'$ is a subgraph of $\cl(F)$.
As no column of $X$ contributes any edges to the dual graph of $A'$,
it is enough to consider columns not contained in $X$.
Let $z$ be a column of $A'$ that is not contained in $X$ and
let $v$ be a leaf of $T$ such that
the column $z$ of $A$, which is an element of $M(A)$,
is contained in the linear hull of the labels of the edges on the path from $v$ to the root of $T$.
Hence, the column $z$ of $A'$ contains non-zero entries only in the rows with non-zero entries in the columns that
are labels of the edges on the path from $v$ to the root of $T$.
Consequently, all edges contained in the dual graph of $A'$ because of non-zero entries in the column $z$
are between vertices on the path from the vertex $v$ in $F$ to the root of the corresponding tree of $F$.
It follows that the dual graph of $A'$ is a subgraph of $\cl(F)$ and
so its tree-depth is at most the height of $F$, i.e., it is at most $c_1(A)^2$.
We conclude that $\td_D(A')\le c_1(A)^2$.

It remains to analyze the entry complexity of $A'$.
The entries of $A'$ in the columns of $X$ are zero or one.
Next consider a column $z$ of $A'$ that is not contained in $X$ and
consider a circuit $c$ of $A'$ (and so of $A$)
whose support contains $z$ and some elements of $X$ (such a circuit exists as the columns of $X$
form a basis of the column space of $A'$).
Observe that the entries in the column $z$ are equal to $-c_x/c_z$ (otherwise, $c$ would not be a circuit of $A'$).
We conclude that the entry complexity of $A'$ is at most $2\left\lceil\log_2 (c_1(A)+1)\right\rceil$.
\end{proof}

We are now ready to prove Theorem~\ref{thm:equiv1}.
Note that the condition $\dim\Ker A>0$ in the statement of the theorem 
is necessary as otherwise $A$ has no circuit and so $c_1(A)$ is not defined.

\begin{proof}[Proof of Theorem~\ref{thm:equiv1}]
Consider a rational matrix $A$ with $\dim\Ker A>0$.
Note that $c_1(A)\le g_1(A)$ as every circuit of $A$ is also an element of the Graver basis of $A$.
To prove the existence of the function $f_1$,
let $f_D$ be the function from Theorem~\ref{thm:boundg} and
note that Theorem~\ref{thm:alg-tdD-c1} implies that $g_1(A)\le f_D\left(c_1(A)^2,2\left\lceil\log_2 (c_1(A)+1)\right\rceil\right)$.
\end{proof}

We next combine the algorithms from Theorems~\ref{thm:orig-alg} and~\ref{thm:alg-tdD-c1}.

\begin{corollary}
\label{cor:alg-tdD-opt}
There exists a function $f:\NN\to\NN$ and a fixed parameter algorithm for the parameterization by $k$ that
for a given rational matrix $A$:
\begin{itemize}
\item either outputs that $c_1(A)>k$, or
\item outputs a matrix $A'$ that is row-equivalent to $A$,
      its dual tree-depth is $\td_D^*(A)$ and its entry complexity is at most $f(k)$.
\end{itemize}
\end{corollary}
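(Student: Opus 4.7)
The plan is to compose the algorithms of Theorems~\ref{thm:alg-tdD-c1} and~\ref{thm:orig-alg}. The first acts as a polynomial-time preconditioner producing an equivalent matrix whose dual tree-depth and entry complexity are both controlled purely by $c_1(A)$; the second then refines any such matrix to optimum dual tree-depth. The decisive point is that the \emph{output} of the preconditioner already has entry complexity bounded in $k$, so feeding it into Theorem~\ref{thm:orig-alg}, whose entry-complexity guarantee $\calO(d^2 2^{2d}e)$ depends on the entry complexity of its input, yields a final bound that is a function of $k$ alone.

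First I dispose of the degenerate case $\C(A)=\emptyset$: here the columns of $A$ are linearly independent, so row-reducing $A$ yields an equivalent matrix whose nonzero columns are distinct standard unit vectors and whose remaining rows are zero, giving $\td_D = 1 = \td_D^*(A)$ and entry complexity $1$. Assume henceforth $\C(A)\neq\emptyset$. I apply the polynomial-time algorithm of Theorem~\ref{thm:alg-tdD-c1} to $A$, obtaining an equivalent matrix $A_1$ with $\td_D(A_1)\le c_1(A)^2$ and $\ec(A_1)\le 2\lceil\log_2(c_1(A)+1)\rceil$. If the computed value $\ec(A_1)$ exceeds $2\lceil\log_2(k+1)\rceil$, I output ``$c_1(A)>k$''; this is correct by the contrapositive of the second bound in Theorem~\ref{thm:alg-tdD-c1}.

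Otherwise, I invoke Theorem~\ref{thm:orig-alg} on $A_1$ with the parameters $d:=k^2$ and $e:=2\lceil\log_2(k+1)\rceil\ge\ec(A_1)$. If that algorithm reports $\td_D^*(A_1)>d$, I output ``$c_1(A)>k$''; this is justified because $c_1(A)\le k$ together with Theorem~\ref{thm:alg-tdD-c1} and $\td_D^*(A_1)\le\td_D(A_1)$ would force $\td_D^*(A_1)\le c_1(A)^2\le k^2=d$, a contradiction. In the remaining case, the algorithm returns a matrix $A_2$ equivalent to $A_1$ with $\td_D(A_2)=\td_D^*(A_1)$ and $\ec(A_2)=\calO(d^2 2^{2d}e)=\calO\!\left(k^4\, 2^{2k^2}\log k\right)$. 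Since equivalence is transitive, $A_2$ is equivalent to $A$, and since $\td_D^*$ is an invariant of the equivalence class we have $\td_D^*(A_1)=\td_D^*(A)$; hence $\td_D(A_2)=\td_D^*(A)$ and $A_2$ satisfies the required bounds with $f(k):=\calO\!\left(k^4\, 2^{2k^2}\log k\right)$.

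The preconditioning step is polynomial, and the refinement step is FPT in $k$ by Theorem~\ref{thm:orig-alg}, so the overall algorithm is FPT in $k$. I do not anticipate a substantive obstacle beyond verifying, in each branch that outputs ``$c_1(A)>k$'', that this verdict is actually correct; both such verifications are handled by the contrapositives noted above, and the two case distinctions ($\C(A)$ empty versus not, and the two possible ways of detecting $c_1(A)>k$) are the only places the argument must pause for bookkeeping.
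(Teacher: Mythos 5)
Your proposal is correct and follows essentially the same route as the paper's proof: handle $\C(A)=\emptyset$ by row reduction, precondition with Theorem~\ref{thm:alg-tdD-c1}, reject with ``$c_1(A)>k$'' when the resulting bounds are violated, and otherwise refine with Theorem~\ref{thm:orig-alg} using $d=k^2$ and $e=2\lceil\log_2(k+1)\rceil$, yielding the same entry-complexity bound $\calO(k^4 2^{2k^2}\log k)$. The only cosmetic difference is that the paper checks both the dual tree-depth and the entry complexity of the preconditioned matrix before invoking Theorem~\ref{thm:orig-alg}, whereas you check only the entry complexity and let the second algorithm's rejection catch the other case; both are valid.
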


\begin{proof}
If $\dim\Ker A=0$ for an input matrix $A$,
i.e., the rank of $A$ is equal to the number of the columns, which can be easily verified in polynomial time,
then $A$ is row-equivalent to the unit matrix possibly with some zero rows added,
i.e., to a matrix with dual tree-depth one and entry complexity one.
If $\dim\Ker A>0$,
we apply the algorithm from Theorem~\ref{thm:alg-tdD-c1} to get a matrix $A'$ row-equivalent to $A$ that
has properties given in the statement of Theorem~\ref{thm:alg-tdD-c1}.
If the dual tree-depth of $A'$ is larger than $k^2$ or the entry complexity of $A'$ is larger than $2\lceil\log_2 (k+1)\rceil$,
then $c_1(A)>k$ (by Theorem~\ref{thm:alg-tdD-c1}) and we arrive at the first conclusion.
Otherwise, we apply the algorithm from Theorem~\ref{thm:orig-alg} with parameters $d=k^2$ and $e=2\lceil\log_2 (k+1)\rceil$
to compute a matrix $A''$ row-equivalent to $A'$ and so to $A$ such that
the dual tree-depth of $A''$ is $\td_D^*(A)$ and
the entry complexity of $A''$ is bounded by a function of $k$ only.
\end{proof}

Finally, the previous corollary together with Theorem~\ref{thm:boundg} yields the parameterized algorithm
for testing whether an input matrix is row-equivalent to a matrix with small dual tree-depth and small entry complexity
as given in Theorem \ref{thm:alg-tdD}.

\begin{proof}[Proof of Theorem~\ref{thm:alg-tdD}]
Let $f_D$ be the function from the statement of Theorem~\ref{thm:boundg} and set $k=f_D(d,e)$;
note $f_D(d,e)\le 2^{2^{(d \log e)^{O(1)}}}$ by Eisenbrand et al.~\cite{EisHKKLO19}.
Apply the algorithm from Corollary~\ref{cor:alg-tdD-opt} with the parameter $k$ to an input matrix $A$.
If the algorithm reports that $c_1(A)>k$,
then $A$ is not row-equivalent to a matrix with dual tree-depth at most $d$ and entry complexity at most $e$.
If the algorithm outputs a matrix $A'$ and $\td_D(A')>d$, then $\td_D^*(A)>d$ and
so the matrix $A$ is not row-equivalent to a matrix with dual tree-depth at most $d$.
Otherwise, the dual tree-depth of $A'$ is at most $d$ and
its entry complexity is bounded by $f(k)=f(f_D(d,e))$
where $f(\cdot)$ is the function from Corollary~\ref{cor:alg-tdD-opt},
i.e., the entry complexity of $A'$ is bounded by a function of $d$ and $e$ only as required.
\end{proof}

\section{Computational hardness of depth parameters}
\label{sec:hardness}

In this section,
we complement our algorithmic results by establishing computational hardness of matroid depth parameters
that we have discussed in this paper.
The hardness results apply even when the input matroid is given by its representation over a fixed (finite or infinite) field.

We start with defining a matroid $M_{\FF}(G)$ derived from a graph $G$.
Fix a field $\FF$.
For a graph $G$, we define an $\FF$-represented matroid $M_{\FF}(G)$ as follows.
The matroid $M_{\FF}(G)$ contains $|V(G)|+|E(G)|$ elements,
each corresponding to a vertex or an edge of $G$.
We associate each element of $M_{\FF}(G)$ with a vector of $\FF^{V(G)}$.
An element of $M_{\FF}(G)$ corresponding to a vertex $w$ of $G$ is represented by $e_w$ and
an element of $M_{\FF}(G)$ corresponding to an edge $ww'$ of $G$ is represented by $e_w-e_{w'}$ or $e_{w'}-e_w$ (an arbitrary
of the two vectors can be chosen as the choice does not affect the matroid).

We next define a graph $G/A$ for a graph $G$ and a linear subspace $A$ of $\FF^{V(G)}$.
Let $W$ be the subset of vertices of $V(G)$ such that $e_w\in A$ for $w\in W$, and
let $F$ be the set of edges $ww'$ of $G$ such that
neither $w$ nor $w'$ is contained in $W$ and
$A$ contains a vector $e_w+\alpha e_{w'}$ for a non-zero element $\alpha\in\FF$.
The graph $G/A$ is obtained by deleting all vertices of $W$ and
then contracting a maximal acyclic subset of edges contained in $F$ (the remaining edges of $F$
become loops and so get removed);
note that we use an acyclic subset of edges for contraction so that the resulting graph is well-defined.

The next lemma relates
the number of components of the matroid $M_{\FF}(G)/A$ and the number of components of the graph $G/A$
for a graph $G$ and a linear subspace $A$ of $\FF^{V(G)}$.

\begin{lemma}
\label{lm:A-contract}
Let $G$ be a graph, $\FF$ a field and $A$ a linear subspace of $\FF^{V(G)}$.
The number of components of $M_{\FF}(G)/A$ is at most the number of components of the graph $G/A$.
\end{lemma}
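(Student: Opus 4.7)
My plan is to produce a surjection from the connected components of $G/A$ onto the non-trivial components of $M_\FF(G)/A$, which immediately yields the claimed inequality. The core idea is to define a natural map from non-loop elements of the matroid to vertices and edges of $G/A$, and to show that elements with the same image (after passing to components) are forced to lie in a common matroid component.

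First I would classify the images of matroid elements in $\FF^{V(G)}/A$: a vertex element $e_v$ is a loop iff $v\in W$, and an edge element $e_u-e_v$ is a loop iff $\{u,v\}\subseteq W$ or $uv\in F$ with $e_u-e_v\in A$. The remaining non-loop elements split into (i) those whose image in the quotient is a nonzero scalar multiple of a unique $e_v$ with $v\notin W$ --- this covers the vertex elements $e_v$, the edges with exactly one endpoint in $W$, and the edges in $F$ that do not become loops --- and (ii) edge elements $e_u-e_v$ with $u,v\notin W$ and $uv\notin F$. In case (ii) the classes $[u]$ and $[v]$ must be distinct in $G/A$, since otherwise a path of contracted $F$-edges from $u$ to $v$ would produce a vector $e_u-\gamma e_v\in A$ with $\gamma\neq 0$, forcing $uv\in F$. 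I then define $\pi$ by sending case (i) elements to the vertex $[v]$ and case (ii) elements to the edge $[u][v]$ of $G/A$, and let $\bar\pi$ be the induced map to connected components of $G/A$.

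Next I would verify that each fibre $\bar\pi^{-1}(K)$ lies in a single component of $M_\FF(G)/A$. Within a fixed class $[v]$ of $K$, every matroid element of $\pi^{-1}([v])$ projects to a nonzero scalar multiple of $e_v$, so any two of them form a two-element circuit in the quotient. For classes $[u]\neq[v]$ of $K$ joined by an edge $uv\in E(G/A)$, I would show that $\{e_u,e_v,e_u-e_v\}$ is a circuit of $M_\FF(G)/A$: the set is clearly dependent, and each two-element subset is independent, for if $e_u$ and $e_v$ were parallel in the quotient then $e_u+\alpha e_v\in A$ for some $\alpha\neq 0$, forcing $uv\in F$ and contradicting $uv\in E(G/A)$; the analogous independence of $\{e_u,e_u-e_v\}$ and $\{e_v,e_u-e_v\}$ uses $u,v\notin W$. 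Connectedness of $K$ then propagates the relation ``same matroid component'' along paths in $K$ by transitivity.

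Since every non-trivial component of $M_\FF(G)/A$ contains a non-loop element $x$ and $x\in\bar\pi^{-1}(K)$ for some component $K$ of $G/A$, the assignment $K\mapsto$ (matroid component containing $\bar\pi^{-1}(K)$) is a well-defined surjection onto the non-trivial components of $M_\FF(G)/A$, giving the inequality. The main delicate step is the verification that $\{e_u,e_v,e_u-e_v\}$ is a genuine circuit rather than merely a dependent set; here the clause requiring \emph{non-zero} $\alpha$ in the definition of $F$ is exactly what prevents unintended parallelisms between vertex elements in the quotient.
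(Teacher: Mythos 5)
Your proof is correct and takes essentially the same route as the paper's: the key facts are identical (all non-loop elements lying over the same vertex class of $G/A$ are pairwise parallel, the triple $\{e_u,e_v,e_u-e_v\}$ over an edge of $G/A$ is a circuit of the contracted matroid, and transitivity of lying in a common circuit chains these along each component of $G/A$). The only difference is organizational: the paper first contracts by the subspace $A'$ spanned by the $e_w$, $w\in W$, and the $F$-edge relations to obtain an explicit representation over $\FF^{V(G/A')}$ and then contracts by $A/A'$, whereas you carry out the same case analysis directly in $\FF^{V(G)}/A$; note that, exactly like the paper's argument, yours accounts only for the non-trivial components of the matroid.
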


\begin{proof}
Fix a graph $G$, a field $\FF$ and a linear subspace of $\FF^{V(G)}$.
Let $W$ and $F$ be the subsets of vertices and edges of $G$ as in the definition of $G/A$, respectively.
Let $A'$ be the linear subspace of $\FF^{V(G)}$ generated
by the vectors $e_w$, $w\in W$, and the vectors $e_w+\alpha e_{w'}\in A$ for $ww'\in F$;
clearly, $A'$ is a subspace of $A$.
Note that the sets $W$ and $F$ defined with respect to $A$ would be the same if defined with respect to $A'$, and
so the graphs $G/A$ and $G/A'$ are the same.
We next describe an $\FF$-representation of the matroid $M_{\FF}(G)/A'$ using vectors of $\FF^{V(G/A')}$.
The matroid $M_{\FF}(G)$ contains elements corresponding to vertices and to edges of $G$.
Consider a vertex $w$ of $V(G)$.
If $w\in W$, then the element corresponding to $w$ is a loop in $M_{\FF}(G)/A'$ and so represented by the zero vector.
If $w\not\in W$, then the element corresponding to $w$ is represented by the vector $e_{u}$
where $u$ is the vertex of $G/A'$ that the vertex $w$ was contracted to.
Next consider an edge $ww'$ of $G$.
\begin{itemize}
\item
    If both $w$ and $w'$ belong to $W$,
    then the element corresponding to $ww'$ is a loop in $M_{\FF}(G)/A'$ and so represented by the zero vector.
\item
    If exactly one of $w$ and $w'$ belong to $W$, say $w\in W$ and $w'\not\in W$,
    then the element corresponding to $ww'$ is represented by the vector $e_{u}$
    where $u$ is the vertex of $G/A'$ that the vertex $w'$ was contracted to.
\item
    If neither $w$ nor $w'$ belongs to $W$ and $e_w-e_{w'}\in A'$ (and so $ww'\in F$),
    then the element corresponding to $ww'$
    is a loop in $M_{\FF}(G)/A'$ and so represented by the zero vector.
\item
    If neither $w$ nor $w'$ belongs to $W$, $e_w-e_{w'}\not\in A'$ but $ww'\in F$,
    then the element corresponding to $ww'$ is represented by the vector $e_{u}$
    where $u$ is the vertex of $G/A'$ that the vertex $w$ (and so the vertex $w'$) was contracted to.
\item
    Finally, if neither $w$ nor $w'$ belongs to $W$ and $ww'\not\in F$,
    the element corresponding to $ww'$ is the vector $\alpha e_u+\alpha' e_{u'}$
    where $u$ is the vertex of $G/A'$ that the vertex $w$ was contracted to,
    $u'$ is the vertex of $G/A'$ that the vertex $w'$ was contracted to, and
    the vector $\alpha e_u+\alpha' e_{u'}$ corresponds to the vector $e_w-e_{w'}$ in $\FF^{V(G)}/A'$;
    note that $uu'$ is an edge of $G/A'$ and both coefficients $\alpha$ and $\alpha'$ are non-zero.
\end{itemize}
It is straightforward to verify that the just described representation
is indeed a representation of the matroid $M_{\FF}(G)/A'$;
note that each non-loop element of $M_{\FF}(G)/A'$ is associated to a vertex or an edge of $G/A'$ and
each vertex or an edge of $G/A'$ with at least one (but possibly more) non-loop element of $M_{\FF}(G)/A'$.

We now analyze the matroid $M_{\FF}(G)/A$.
The matroid $M_{\FF}(G)/A$ can be viewed as the matroid $(M_{\FF}(G)/A')/(A/A')$.
Observe that the definition of $A'$ implies that any non-loop element of $M_{\FF}(G)/A'$ is non-loop in $M_{\FF}(G)/A$.
Also observe that the space $A/A'$ viewed as a subspace of $\FF^{V(G/A')}$
does not contain a vector $e_w$ for $w\in V(G/A')$ or
a vector $e_w+\alpha e_{w'}$ for a non-zero $\alpha\in \FF$ such that $ww'$ is an edge of $E(G/A')$.
In particular, the support of every vector of $A/A'$ is at least two and
if the support has size two, then it does not correspond to an edge of $E(G/A')$.
It follows that if $ww'$ is an edge of $E(G/A')$,
$x$ is an element of $M_{\FF}(G)/A'$ associated with the vertex $w$,
$x'$ is an element of $M_{\FF}(G)/A'$ associated with the vertex $w'$, and
$x''$ is an element of $M_{\FF}(G)/A'$ associated with the edge $ww'$,
then the elements $x$, $x'$ and $x''$ form a circuit of $(M_{\FF}(G)/A')/(A/A')$.
Since the relation of being contained in a common circuit is transitive,
it follows that all elements of $M_{\FF}(G)/A'$ corresponding to the vertices and the edges of the same component of $G/A'$
are contained in the same component of $(M_{\FF}(G)/A')/(A/A')$.
In particular, the number of components of the matroid $(M_{\FF}(G)/A')/(A/A')$
is at most the number of components of the graph $G/A'$.
Since the graphs $G/A$ and $G/A'$ are the same and
the matroids $M_{\FF}(G)/A$ and $(M_{\FF}(G)/A')/(A/A')$ are also the same,
the lemma follows.
\end{proof}

We next link the existence of a balanced independent set in a bipartite graph
to the contraction$^*$-depth of a suitably defined matroid.
We remark that the idea of using a bipartite graph with cliques added between the vertices of its parts
was used in~\cite{pothen-td-hardness} to establish that computing tree-depth of a graph is NP-complete.

\begin{lemma}
\label{lm:graph-reduction}
Let $G$ be a bipartite graph with parts $X$ and $Y$, let $\FF$ be a field, and let $k$ be an integer.
Let $G'$ be the graph obtained from $G$ by adding all edges between the vertices of $X$ and between the vertices of $Y$.
The following three statements are row-equivalent.
\begin{itemize}
\item The graph $G$ has an independent set containing $k$ elements of $X$ and $k$ elements of $Y$.
\item The contraction$^*$-depth of $M_{\FF}(G')$ is at most $|X|+|Y|-k$.
\item The contraction-depth of the matroid $2M_{\FF}(G')$ is at most $|X|+|Y|-k+1$.
\end{itemize}
\end{lemma}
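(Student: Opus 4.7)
I plan to prove $(1)\Rightarrow(2)$, $(1)\Rightarrow(3)$, and $(2)\Rightarrow(1)$ directly, and then derive $(3)\Rightarrow(1)$ as the composition $(3)\Rightarrow(2)\Rightarrow(1)$, where the first step is the general inequality $\csd(M)\le\cd(2M)-1$ valid for every $\FF$-represented matroid $M$.

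The constructive directions $(1)\Rightarrow(2)$ and $(1)\Rightarrow(3)$ follow a common recipe. Given the independent set $S$ with $|S\cap X|=|S\cap Y|=k$, form the subspace $A=\lin{\{e_w:w\in V(G')\setminus S\}}$ of dimension $|X|+|Y|-2k$. Because $S$ is independent in $G$, the subgraph $G'[S]$ is the disjoint union of two cliques, one on $S\cap X$ and one on $S\cap Y$, and $M_\FF(G')/A$ decomposes into the two components corresponding to these cliques, each of rank $k$ and isomorphic to $M_\FF(K_k)$ up to parallel copies. Contracting the $|X|+|Y|-2k$ generators of $A$ (or their clones in $2M_\FF(G')$ for the direction $(1)\Rightarrow(3)$) and then applying the generic bounds $\csd(M)\le r(M)$ and $\cd(M)\le r(M)+1$ to each $K_k$-like component yields the claimed bounds $\csd(M_\FF(G'))\le|X|+|Y|-k$ and $\cd(2M_\FF(G'))\le|X|+|Y|-k+1$.

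The main work goes into $(2)\Rightarrow(1)$. Fix $k\ge1$ and a contraction$^*$-decomposition of $M_\FF(G')$ of depth at most $d=|X|+|Y|-k$. Since $d<r(M_\FF(G'))$, the decomposition must contain a step at which the accumulated contracted subspace $A$ produces a quotient $M_\FF(G')/A$ with at least two non-trivial components; focus on the first such step, with $s=\dim A$, and let $C_1,C_2$ be these non-trivial components, with ranks satisfying $r_1+r_2=|X|+|Y|-s$ and each subsequently handled by a subtree of depth at most $d-s$. Lemma~\ref{lm:A-contract}, combined with the fact that $X$ and $Y$ remain cliques in $G'/A$ (so the graph has at most two components), forces $G'/A$ to have exactly two components $X^*$ and $Y^*$ which must correspond to $C_1$ and $C_2$, giving $r_1=|X^*|$ and $r_2=|Y^*|$. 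Each $C_i$ is $M_\FF(K_{r_i})$ with some parallel copies, and a sub-claim $\csd(M_\FF(K_\ell))=\ell$ (proved by induction on $\ell$ via the recursive definition, using that any one-dimensional contraction of $M_\FF(K_\ell)$ yields, up to parallel copies, a matroid that still contains $\ell$ pairwise non-parallel vertex directions in rank $\ell-1$) combined with invariance of $\csd$ under adding parallel copies gives $\csd(C_i)=r_i$. Hence $r_i\le d-s$ for each $i$, which together with $r_1+r_2=|X|+|Y|-s$ forces $\min(r_1,r_2)\ge k$. To conclude, select one representative from $X$ for each equivalence class of $X^*$ and from $Y$ for each class of $Y^*$: the resulting set has at least $k$ representatives in each of $X$ and $Y$, and it is independent in $G$ because any edge $xy\in E(G)$ between two representatives would place $\bar e_x,\bar e_y$, and $\bar e_x-\bar e_y$ in a common circuit of $M_\FF(G')/A$, contradicting $C_1\ne C_2$; trimming yields the required set.

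For $(3)\Rightarrow(2)$, I will establish the inequality $\csd(M)\le\cd(2M)-1$ by induction on $\cd(2M)$: any element $\alpha$ contracted in an optimal $\cd$-decomposition of $2M$ is realized in $M$ as a contraction by the one-dimensional subspace $\lin{\alpha}$, while the twin clone of $\alpha$ becomes a loop that incurs no extra cost in the $\csd$-decomposition, and the saving of one reflects that a rank-zero matroid has $\csd=0$ but $\cd=1$. The main obstacle in the whole proof is the sub-claim $\csd(M_\FF(K_\ell))=\ell$: it is sharper than the $\log_2\ell$ lower bound provided by Theorem~\ref{thm:circuit}, and its verification requires a careful case analysis over the possible one-dimensional contractions, showing that the quotient always retains enough of the $K_{\ell-1}$-like structure for the induction to go through.
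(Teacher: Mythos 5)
Your overall architecture is close to the paper's: the forward directions are handled exactly as in the paper (contract the vertex vectors outside the independent set, split into two rank-$k$ components, apply $\csd(M)\le r(M)$ and $\cd(M)\le r(M)+1$), and your route for $(3)\Rightarrow(2)$ via the general inequality $\csd(M)\le\cd(2M)-1$ is valid and arguably cleaner than the paper's ad hoc treatment, which instead argues inside $2M_{\FF}(G')$ that every element has a parallel twin so each surviving component costs its rank plus one. However, there is a genuine gap in your key direction $(2)\Rightarrow(1)$. The claim that each component $C_i$ of $M_{\FF}(G')/A$ ``is $M_{\FF}(K_{r_i})$ with some parallel copies'' is false in general, so the sub-claim $\csd(M_{\FF}(K_\ell))=\ell$ --- which you yourself identify as the load-bearing step --- does not apply to the actual components. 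The subspace $A$ accumulated by the decomposition need not be generated by vectors of the form $e_w$ or $e_w+\alpha e_{w'}$; after factoring, the edge elements become vectors $\alpha\bar{e}_u+\alpha'\bar{e}_{u'}$ with arbitrary non-zero coefficients, and $C_i$ is a further quotient of such a generalized clique matroid by $A/A'$. For the same reason the asserted equality $r_1=|X^*|$ fails in general (only $r_1\le|X^*|$ holds, which is fortunately the direction your counting needs). The clean way to obtain $\csd(C_i)=r_i$ is not via the internal structure of $C_i$ but via Lemma~\ref{lm:A-contract} applied uniformly: since $X$ and $Y$ are cliques in $G'$, the graph $G'/A'$ has at most two components for \emph{every} subspace $A'$, hence $M_{\FF}(G')/A'$ never has more than two non-trivial components, hence once the decomposition splits the matroid into $C_1$ and $C_2$ neither can ever be disconnected again, and a connected represented matroid that stays connected under all contractions has contraction$^*$-depth equal to its rank. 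This is exactly how the paper closes the argument.

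A second, smaller divergence: the paper extracts the independent set not by choosing representatives of contraction classes, but by passing to the subspace $A_W$ generated by the pure vertex vectors $e_w\in A$, observing that $G'\setminus W$ is already disconnected with parts $X\setminus W$ and $Y\setminus W$ (so $(X\cup Y)\setminus W$ is independent in $G$), and lower-bounding $|X\setminus W|$ and $|Y\setminus W|$ by comparing the ranks of the components of $M_{\FF}(G')/A_W$ with those of $M_{\FF}(G')/A$. Your representative-selection argument can be repaired --- the images $\bar{e}_x$, $\bar{e}_y$ and $\bar{e}_x-\bar{e}_y$ do force $x$ and $y$ into one component whenever both are non-loops --- but only after replacing $r_i=|X^*|$ by $r_i\le|X^*|$ and justifying that inequality from the representation of $M_{\FF}(G')/A'$ built in the proof of Lemma~\ref{lm:A-contract}. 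Finally, your inductive invariant for the $K_\ell$ sub-claim (``still contains $\ell$ pairwise non-parallel vertex directions in rank $\ell-1$'') is also incorrect as stated, since contracting the span of $e_w$ or of $e_w-e_{w'}$ leaves only $\ell-1$ distinct directions; this becomes moot once the sub-claim is replaced by the never-disconnects argument above.
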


\begin{proof}
Fix a bipartite graph $G$ with parts $X$ and $Y$, a field $\FF$ and an integer $k$.
We first show that if $G$ has an independent set containing $k$ elements of $X$ and $k$ elements of $Y$,
then the contraction$^*$-depth of $M_{\FF}(G')$ is at most $|X|+|Y|-k$ and
the contraction-depth of $2M_{\FF}(G')$ is at most $|X|+|Y|-k+1$.
Let $W$ be such an independent set and
let $W'$ be the set containing all elements $e_w$ of $M_{\FF}(G')$ such that $w\not\in W$.
Note that $|W'|=|X|+|Y|-2k$.
The matroid $M_{\FF}(G')/W'$ is the matroid obtained from $M_{\FF}(G'[W])$ by adding
\begin{itemize}
\item a loop for every edge with both end vertices not contained in $W$, and 
\item an element represented by $e_w$ for every edge joining a vertex $w\in W$ to a vertex not contained in $W$.
\end{itemize}
In particular, the matroid $M_{\FF}(G')/W'$ has two non-trivial components, each of rank $k$, and
so the contraction$^*$-depth of $M_{\FF}(G')$ is at most $|W'| + k = |X| + |Y| - k$.
Similarly, the matroid $(2M_{\FF}(G'))/W'$ is the matroid obtained from $2M_{\FF}(G'[W])$ by adding
\begin{itemize}
\item a loop for every vertex not contained in $W$,
\item two loops for every edge with both end vertices not contained in $W$, and
\item two elements represented by $e_w$ for every edge joining a vertex $w\in W$ to a vertex not contained in $W$.
\end{itemize}
Since the matroid $(2M_{\FF}(G'))/W'$ has two non-trivial components, each of rank $k$,
its contraction-depth is at most $|W'|+k+1=|X|+|Y|-k+1$ (note that any rank $r$ matroid
has contraction-depth at most $r+1$).

We next argue that
if the contraction$^*$-depth of $M_{\FF}(G')$ is at most $|X|+|Y|-k$ or
the contraction-depth of $2M_{\FF}(G')$ is at most $|X|+|Y|-k+1$,
then there exists a subset $W$ of $V(G)=V(G')$ that is independent in $G$, $|W\cap X|\ge k$ and $|W\cap Y|\ge k$.
To do so,
we first show that there is no linear subspace $A$ such that $M_{\FF}(G')/A$ would have more than two components.
Consider a linear subspace $A$ of $\FF^{V(G')}$ such that the matroid $M_{\FF}(G')/A$ is not connected.
By Lemma~\ref{lm:A-contract}, the graph $G'/A$ is disconnected.
Since the graph $G'/A$ cannot have more than two components (one is formed by some of the vertices of $X$ and
another by some of the vertices of $Y$),
it follows that
the graph $G'/A$ has exactly two components and so the matroid $M_{\FF}(G')/A$ has exactly two components, too.

If the contraction$^*$-depth of $M_{\FF}(G')$ is at most $|X|+|Y|-k$,
there exists a linear subspace $A$ of $\FF^{V(G')}$ such that the matroid $M_{\FF}(G')/A$ is not connected and
the rank of each of its two components is at most $|X|+|Y|-k-\dim A$.
We will prove that the existence of such $A$ is also implied
by the assumption that the contraction-depth of $2M_{\FF}(G')$ is at most $|X|+|Y|-k+1$.
We next use that the matroid $(2M_{\FF}(G'))/F$ has at most two non-trivial components
for every subset $F$ of the elements of $2M_{\FF}(G')$.
If the contraction-depth of $2M_{\FF}(G')$ is at most $|X|+|Y|-k+1$,
then there exists a subset $F$ of the elements of $2M_{\FF}(G')$ such that
the matroid $(2M_{\FF}(G'))/F$ is not connected and
the rank of each of its two components is at most $|X|+|Y|-k-\rank F$ 
(as the contraction-depth of each of its two components is the rank of the component increased by one
because each element is parallel to at least one other element).
It follows that there exists a linear subspace $A$ of $\FF^{V(G')}$,
which is the hull of the vectors representing the elements of the set $F$ as above, such that
the matroid $M_{\FF}(G')/A$ is not connected and
the rank of each of its two components is at most $|X|+|Y|-k-\dim A$.
We conclude that
if the contraction$^*$-depth of $M_{\FF}(G')$ is at most $|X|+|Y|-k$ or
if the contraction-depth of $2M_{\FF}(G')$ is at most $|X|+|Y|-k+1$,
then there exists a linear subspace $A$ of $\FF^{V(G')}$ such that the matroid $M_{\FF}(G')/A$ is not connected and
the rank of each of its two components is at most $|X|+|Y|-k-\dim A$.

It remains to show that
\emph{the existence of a subspace $A$ of $\FF^{V(G')}$ such that
the matroid $M_{\FF}(G')/A$ is not connected and
the rank of each of its two components is at most $|X|+|Y|-k-\dim A$
implies that there exists an independent set containing $k$ elements of $X$ and $k$ elements of $Y$}.
Fix such a subspace $A$.
Let $W$ be the set of vertices $w$ such that $e_w$ is contained in $A$ and
let $A_W$ be the subspace of $A$ generated by the vectors $e_w$, $w\in W$.
Since $G'/A$ is not connected, the graph $G'\setminus W$ is also not connected (recall that $G'/A$
is obtained by removing the vertices of $W$ and then contracting some edges).
By Lemma \ref{lm:A-contract} the matroid $M_{\FF}(G')/A_W$ is also not connected.
Since the space $A_W$ is a subspace of $A$,
the rank of each component of $M_{\FF}(G')/A_W$
is larger by at most $\dim A-\dim A_W$ compared to the corresponding component of $M_{\FF}(G')/A$.
Hence, the rank of each of the two components of $M_{\FF}(G')/A_W$ is at most $|X|+|Y|-k-\dim A_W=|X|+|Y|-k-|W|$.
It follows that each component of the graph $G'/A_W=G'\setminus W$ contains at most $|X|+|Y|-k-|W|$ vertices.
Since the sum of the sizes of the two components of $G'\setminus W$ is $|X|+|Y|-|W|$,
each component of $G'\setminus W$ has at least $k$ vertices.
In addition, the vertex set of each component of $G'\setminus W$ is either a subset of $X$ or a subset of $Y$,
which implies that there is no edge joining a vertex of $X\setminus W$ and a vertex of $Y\setminus W$ and
both sets $X\setminus W$ and $Y\setminus W$ have at least $k$ vertices.
Hence, the graph $G$ has an independent set containing $k$ elements of $X$ and $k$ elements of $Y$ (such an independent
set is a subset of $V(G)\setminus W$).
\end{proof}

We are now ready to state our hardness result.

\begin{theorem}
\label{thm:NPc}
For every field $\FF$,
each of the following five decision problems, whose input is an $\FF$-represented matroid $M$ and an integer $d$,
is NP-complete:
\begin{itemize}
\item Is the contraction-depth of $M$ at most $d$?
\item Is the contraction$^*$-depth of $M$ at most $d$?
\item Is the contraction-deletion-depth of $M$ at most $d$?
\item Is the contraction$^*$-deletion-depth of $M$ at most $d$?
\item Is the deletion-depth of $M$ at most $d$?
\end{itemize}
\end{theorem}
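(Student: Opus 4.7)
The plan is to reduce from the NP-complete problem of deciding whether a bipartite graph $G$ with parts $X$ and $Y$ admits an independent set containing at least $k$ vertices of each part (equivalently, whether the bipartite complement of $G$ contains $K_{k,k}$ as a subgraph; this is the Balanced Complete Bipartite Subgraph problem). Membership in NP of each of the five decision problems is clear: the corresponding decomposition tree is a polynomial-size certificate whose validity is verifiable in polynomial time.

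The NP-hardness of the contraction$^*$-depth and contraction-depth problems follows immediately from Lemma~\ref{lm:graph-reduction}: given an instance $(G,k)$, I would construct the graph $G'$ and then compute an $\FF$-representation of $M_\FF(G')$ (respectively of $2M_\FF(G')$) in polynomial time, and ask whether its contraction$^*$-depth is at most $|X|+|Y|-k$ (respectively whether its contraction-depth is at most $|X|+|Y|-k+1$). The NP-hardness of the deletion-depth problem then follows from the identity $\dd(M)=\cd(M^*)$ recorded in the preliminaries, combined with the fact that an $\FF$-representation of $M^*$ is computable in polynomial time from one of $M$: one composes the reduction above with passing to the dual.

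For the contraction-deletion-depth and contraction$^*$-deletion-depth, the main difficulty is that allowing deletion operations might a priori shortcut the recursive decomposition, breaking the correspondence with balanced independent sets. To circumvent this, my plan is to replicate each element of $M_\FF(G')$ (and of $2M_\FF(G')$) sufficiently many times, working with $tM_\FF(G')$ for $t$ polynomial in $|V(G')|$. The central claim would be that for such $t$, deleting a single parallel copy changes neither the component structure nor the rank of the matroid, so the only operations that can make progress in a contraction-deletion decomposition are contractions. Making this precise should yield equalities of the form $\cdd(tM_\FF(G'))=\cd(tM_\FF(G'))=\cd(M_\FF(G'))$ and analogously for the starred variants, so that the same reduction from the Balanced Complete Bipartite Subgraph problem transfers, possibly after an appropriate shift of the threshold.

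The main obstacle will be turning the informal irrelevance of deletion into a rigorous argument: one must show that every optimal contraction-deletion decomposition of $tM_\FF(G')$ can be transformed, branch by branch, into a contraction-only decomposition of no larger height. A natural route is to traverse an optimal decomposition from the root to the leaves and, whenever a deletion of a parallel element occurs, replace the affected subtree by one that records only contractions; the multiplicity $t$ must be chosen large enough that along every branch enough copies of each element survive so that the connectivity sequence of the modified decomposition mirrors that of the original.
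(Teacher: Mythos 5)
Your proposal follows essentially the same route as the paper: the same reduction from balanced independent sets in bipartite graphs via Lemma~\ref{lm:graph-reduction}, duality for deletion-depth, and a polynomial blow-up of parallel classes to neutralize deletions for the two mixed parameters (the paper takes $(|G'|+1)M_{\FF}(G')$ and uses the fact that contraction-deletion-depth is at most the rank plus one, so no branch is long enough to delete an entire parallel class, which is exactly the rigorous form of your ``enough copies survive along every branch'' step). The only point to tighten is that $\cd(tM_{\FF}(G'))$ equals $\cd(2M_{\FF}(G'))$ rather than $\cd(M_{\FF}(G'))$, which is precisely the threshold shift you already anticipated.
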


\begin{proof}
It is NP-complete to decide for a bipartite graph $G$ with parts $X$ and $Y$ and an integer $k$ whether
there exist $k$-element subsets $X'\subseteq X$ and $Y'\subseteq Y$ such that $X'\cup Y'$ is independent~\cite{pothen-td-hardness}.
For an input bipartite graph $G$,
let $G'$ be the graph obtained from $G$ by adding all edges between the vertices of $X$ and between the vertices of $Y$.
We claim that
the existence of such subsets $X'$ and $Y'$
is equivalent to each of the following four statements:
\begin{itemize}
\item The matroid $2M_{\FF}(G')$ has contraction-depth at most $|X|+|Y|-k+1$.
\item The matroid $M_{\FF}(G')$ has contraction$^*$-depth at most $|X|+|Y|-k$.
\item The matroid $(|V(G')|+1)M_{\FF}(G')$ has contraction-deletion-depth at most $|X|+|Y|-k+1$.
\item The matroid $(|V(G')|+1)M_{\FF}(G')$ has contraction$^*$-deletion-depth at most $|X|+|Y|-k$.
\end{itemize}
The equivalences to the first and second statements follow directly from Lemma~\ref{lm:graph-reduction}.
Since the rank of the matroid $(|V(G')|+1)M_{\FF}(G')$ is $|G'|$,
its contraction-deletion-depth is at most $|G'|+1$ and
its contraction$^*$-deletion-depth is at most $|G'|$.
As each element of the matroid $(|V(G')|+1)M_{\FF}(G')$ is parallel to (at least) $|V(G')|$ elements of the matroid,
it follows that the contraction-deletion-depth of $M_{\FF}(G')$ is the same as its contraction-depth and
its contraction$^*$-deletion-depth is the same as its contraction$^*$-depth.
Lemma~\ref{lm:graph-reduction} now implies the equivalence of the third and fourth statements.
As the matroids $2M_{\FF}(G')$, $M_{\FF}(G')$ and $(|V(G')|+1)M_{\FF}(G')$ can be easily constructed from the input graph $G$ in time polynomial in $|V(G)|$,
the NP-completeness of the first four problems listed in the statement of the theorem follows.

For an $\FF$-represented matroid $M$,
it is easy to construct an $\FF$-represented matroid $M^*$ that is dual to $M$
in time polynomial in the number of the elements of $M$~\cite[Chapter 2]{Oxl11}.
Since the contraction-depth of $M$ is equal to the deletion-depth of $M^*$,
it follows that the fifth problem listed in the statement of the theorem is also NP-complete.
\end{proof}

\section{Concluding remarks}

We would like to conclude with addressing three natural questions related to the work presented in this paper.

In Section~\ref{sec:complex},
we have given a structural characterization of matrices $A$ with $g_1(A)$ bounded by showing
that $g_1(A)$ is bounded if and only if
$A$ is row-equivalent to a matrix with small dual tree-depth and small entry complexity.
Unfortunately,
a similar (if and only if) characterization of matrices $A$ with $g_{\infty}(A)$ bounded does not seem to be in our reach.
\begin{problem}
Find a structural characterization of matrices $A$ with $g_{\infty}(A)$ bounded.
\end{problem}
In view of Theorem~\ref{thm:boundg}, it may be tempting to think that
such a characterization can involve matrices with bounded incidence tree-depth
as if a matrix $A$ has bounded primal tree-depth or it has bounded dual tree-depth, then $g_{\infty}(A)$ is bounded.
However, the following matrix $A$ has incidence tree-depth equal to $4$ and
yet $g_{\infty}(A)$ grows with the number $t$ of its columns;
in particular, the vector $(t-1,1,1,\ldots,1)$ is an element of its Graver basis as
it can be readily verified.
We remark that a similar matrix was used by Eiben et al.~\cite{EibGKOPW20} in their NP-completeness argument.
\[\left(
  \begin{matrix}
  1 & -1 & -1 & -1 & \cdots & -1 & -1 \\
  0 &  1 & -1 &  0 & \cdots &  0 &  0 \\
  0 &  1 &  0 & -1 & \cdots &  0 &  0 \\
  \vdots & \vdots & \vdots & & \ddots & & \vdots \\
  0 &  1 &  0 &  0 & \cdots & -1 &  0 \\
  0 &  1 &  0 &  0 & \cdots &  0 & -1
  \end{matrix}
  \right)\]

In Section~\ref{sec:struct}, we have given structural characterizations of matrices
that are row-equivalent to a matrix with small primal tree-depth or small incidence tree-depth,
which complements the characterization of matrices row-equivalent to a matrix with small dual tree-depth from~\cite{ChaCKKP19,ChaCKKP20}.
We have also presented fixed parameter algorithms (Theorems~\ref{thm:alg-tdP} and \ref{thm:alg-tdD})
for finding such a row-equivalent matrix with bounded entry complexity if one exists;
both of these algorithms are based on fixed parameter algorithms for finding deletion-depth decompositions and
contraction$^*$-depth decompositions of matroids over finite fields,
which are presented in Corollary~\ref{cor:tdP} in the case of deletion-depth and
in~\cite{ChaCKKP19,ChaCKKP20} in the case of contraction$^*$-depth.
We believe that similar techniques would lead
to a fixed parameter algorithm for contraction$^*$-deletion-depth decompositions of
matroids represented over a finite field (note that contraction$^*$-deletion-depth does not have an obvious description in monadic second order logic and
so the algorithmic results of Hlin\v en\'y~\cite{Hli03a,Hli06} do not readily apply in this setting).
However,
it is unclear whether such an algorithm would yield a fixed parameter algorithm for rational matrices as
we do not have structural results on the circuits of rational matrices with small incidence tree-depth,
which would reduce the case of rational matrices to those over finite fields.

Another natural question is whether
the upper bound on the depth of the principal contraction$^*$-tree given in Theorem~\ref{thm:circuit},
which is quadratic in the length of the longest circuit of a represented matroid, can be improved.
However, this turns out to be impossible as we now argue.
Since the minimum depth of a principal contraction$^*$-tree of a matroid $M$ differs from $\cd(M)$,
i.e. the minimum height of a contraction-tree of $M$, by at most one,
it is enough to construct a sequence of matroids $M_n$ such that
\begin{itemize}
\item the length of the longest circuit of \(M_n\) is is at most \(\calO(n)\), and
\item the contraction-depth of \(M_{n}\) is at least \(\Omega(n^2)\).
\end{itemize}
Hence, the quadratic dependence of the minimum depth in Theorem~\ref{thm:circuit} is optimal up to a constant factor.
Still, it can be the case that the bound on the contraction$^*$-depth can be improved.

The matroids \(M_{n}\) are the graphic matroids of graphs \(G_n\),
which are constructed inductively.
To facilitate the induction we will require slightly stronger properties.
Each of the graphs \(G_n\) contains two distinguished vertices,
denoted by \(r_n\) and \(b_n\), and the following holds:

\begin{enumerate}
    \item The length of any path in \(G_n\) between the vertices \(r_n\) and \(b_n\) is between $n$ and \(2n\).\label{con:path}
    \item The length of any circuit in \(M_n\) is at most \(4n\).\label{con:cycle}
    \item The contraction-depth of \(M_n\) is at least \(n\choose 2\).\label{con:cd}
\end{enumerate}

If $n=1$, we set $G_1$ to be the two-vertex graph formed by two parallel edges, and
\(r_1\) and \(b_1\) are chosen as the two vertices of $G_1$.
Note that the graph $G_1$ and the matroid $M_1=M(G_1)$ has the properties \eqref{con:path}, \eqref{con:cycle}, and \eqref{con:cd}.
To obtain \(G_n\),
we start with a cycle of length $2n$ and choose any two vertices at distance $n$ to be \(r_n\) and \(b_n\).
This cycle containing the vertices \(r_n\) and \(b_n\) will be referred to as the \emph{root cycle}.
We then add \(n\) copies of \(G_{n - 1}\),
connect the vertex \(r_{n - 1}\) in each copy to the vertex \(r_n\), and
connect the vertex \(b_{n - 1}\) in each copy to \(b_{n}\).
The construction is illustrated in Figure~\ref{fig:construction}.

\begin{figure}[ht]
    \centering
    \includegraphics[scale=0.4]{./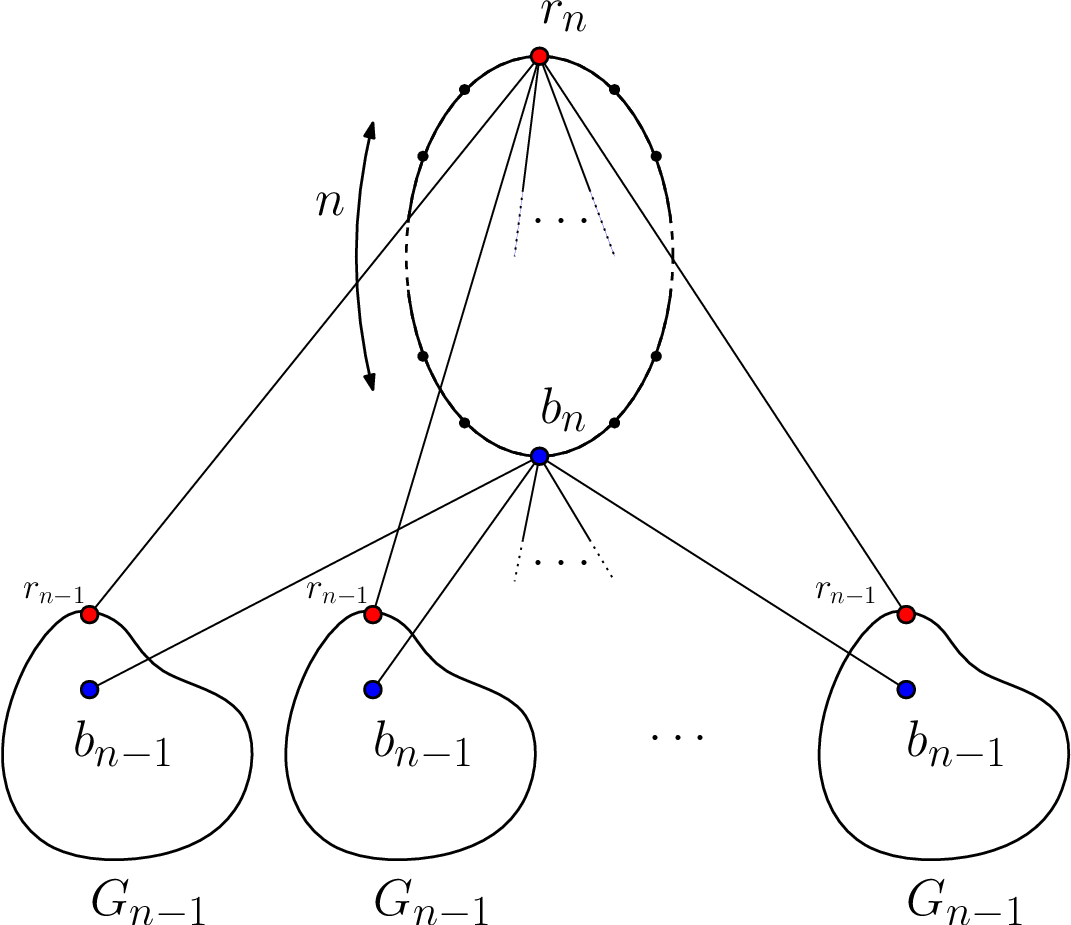}
    \caption{The construction of the graph \(G_{n}\).
             The vertices \(r_n\) and \(r_{n - 1}\) are drawn red
	     while \(b_n\) and \(b_{n - 1}\) are drawn blue.}
    \label{fig:construction}
\end{figure}

Assuming that the matroid $M_{n-1}$ and the graph $G_{n-1}$
have the properties \eqref{con:path}, \eqref{con:cycle}, and \eqref{con:cd}, and
we will show that the matroid $M_n$ and the graph $G_n$ also have these properties.

We start with showing that $G_n$ has the property~\eqref{con:path}.
Indeed, a path from \(r_n\) to \(b_n\) is either contained in the root cycle or
consists of a path between \(r_{n - 1}\) and \(b_{n - 1}\) in one of the copies of \(G_{n - 1}\),
whose length is between $n-1$ and $2(n-1)$, together with two edges joining
\(r_{n - 1}\) to \(r_{n}\) and \(b_{n - 1}\) to \(b_{n}\).
In either of the cases, the length of the path is between $n$ and $2n$ as required.

Having established the property~\eqref{con:path},
we prove the property~\eqref{con:cycle}.
Any circuit of the matroid \(M_{n}\) corresponds to a cycle in the graph $G_n$,
thus we can simply investigate the lengths of cycles in $G_n$.
First observe that a cycle of $G_n$ contains either both vertices \(r_n\) and \(b_n\) or neither of them.
Every cycle containing \(r_n\) and \(b_n\) consists of two paths between \(r_n\) and \(b_n\) and so its length is at most $4n$, and
every cycle containing neither \(r_n\) nor \(b_n\) is contained entirely within a copy of $G_{n-1}$ and so its length is at most $4(n-1)\leq 4n$.

Finally, we argue that contraction-depth of \(M_n\) is at least \({n} \choose 2\).
Recall that contracting an element of \(M_n\) corresponds to contracting the associated edge in the graph \(G_n\), and
components of a graphic matroid correspond to blocks, i.e., maximal $2$-edge-connected components, of an associated graph.
Since the length of any path between \(r_n\) to \(b_n\) is at least $n$,
until at least $n$ edge contractions are performed in graph \(G_n\),
the vertices \(r_n\) and \(b_n\) are distinct and are contained in the same block.
Hence,
after $n-1$ edge contractions followed by deleting all blocks not containing the vertices \(r_n\) and \(b_n\) (if such blocks appear),
the graph still contains an intact copy $G_{n-1}$.
It follows that $\cd(M_n)\ge (n-1)+\cd(M_{n-1})$,
which implies that $\cd(M_n)\ge (n-1)+{n-1\choose 2}={n\choose 2}$.

\section*{Acknowledgements}

All five authors would like to thank the Schloss Dagstuhl-–-Leibniz Center for Informatics for hospitality
during the workshop ``Sparsity in Algorithms, Combinatorics and Logic'' in September 2021
where the work leading to the results contained in this paper was started.
The authors are also indebted to the two anonymous reviewers for their detailed comments on the manuscript,
which have helped to improve the presentation significantly and made the manuscript more accessible to a wider audience.

\bibliographystyle{bibstyle}
\bibliography{ptdepth}

\end{document}